%
\documentclass[runningheads]{llncs}
\usepackage[T1]{fontenc}
%
\usepackage{amsmath}
\usepackage{multicol}
\usepackage{amsfonts}
\usepackage{xcolor}
\usepackage{graphicx}
\usepackage{tcolorbox}

\usepackage{tcolorbox}
\usepackage{tikz}
\usepackage[linesnumbered, ruled, vlined]{algorithm2e}
\usepackage{caption}
\usepackage{subcaption}
\usepackage{cite}

\usepackage{url}
\usepackage{hyperref}
\hypersetup{
    colorlinks=true,
    linkcolor=blue,
    filecolor=magenta,      
    urlcolor=cyan,
}
%
%
\begin{document}
\title{Fairness Driven Slot Allocation Problem in Billboard Advertisement \thanks{This work is supported by the Start-Up Research Grant provided by the Indian Institute of Technology Jammu, India (Grant No.: SG100047).}}
\author{Dildar Ali\inst{1}  \and Suman Banerjee\inst{1} \and Shweta Jain\inst{2} \and Yamuna Prasad\inst{1}}
\authorrunning{Ali et al.} 
\institute{Indian Institute of Technology Jammu, J \& K-181221, India \email{\{2021rcs2009,suman.banerjee,yamuna.prasad\}@iitjammu.ac.in}\and Indian Institute of Technology Ropar, Rupnagar, Punjab-140001, India  \\
\email{shwetajain@iitrpr.ac.in}}
\maketitle
\begin{abstract}
In \emph{billboard advertisement}, a number of digital billboards are owned by an \emph{influence provider}, and several commercial houses (which we call advertisers) approach the influence provider for a specific number of views of their advertisement content on a payment basis. Though the billboard slot allocation problem has been studied in the literature, this problem still needs to be addressed from a fairness point of view. In this paper, we introduce the \textsc{Fair Billboard Slot Allocation} Problem, where the objective is to allocate a given set of billboard slots among a group of advertisers based on their demands fairly and efficiently. As fairness criteria, we consider the \emph{maximin fair share}, which ensures that each advertiser will receive a subset of slots that maximizes the minimum share for all the advertisers. We have proposed a solution approach that generates an allocation and provides an approximate maximum fair share. The proposed methodology has been analyzed to understand its time and space requirements and a performance guarantee. It has been implemented with real-world trajectory and billboard datasets, and the results have been reported. The results show that the proposed approach leads to a balanced allocation by satisfying the maximin fairness criteria. At the same time, it maximizes the utility of advertisers.
\keywords{Billboard Advertisement, Influence, Slot, Advertiser, Utility, Allocation, Maximin Fair Share}
\end{abstract}
\section{Introduction} \label{Sec:Introduction}
 In recent years, Billboard Advertisement has emerged as an effective out-of-home advertisement technique. Many commercial houses have adopted this mode of advertisement due to its guaranteed return on investment. In a billboard advertisement, there is an \emph{influence provider} who owns a set of digital billboards, and a set of advertisers approaches the influence provider for a specific amount of influence on a payment basis. A few computational problems have been studied in the context of billboard advertisements, and in particular, there are two study directions. The first one concerns giving a billboard database; how can we find a limited number of influential billboard slots? The proposed solution methodologies include a modified greedy approach proposed by Zhang et al. \cite{zhang2018trajectory}, a branch and bound approach proposed by Zhang et al. \cite{zhang2019optimizing}, a divide and conquer approach proposed by Wang et al.\cite{10.1145/3495159}, Pruned Submodularity Graph-based approach, Spatial Clustering based approach proposed by Ali et al. \cite{ali2022influential} The other direction of research aims to minimize the regret while allocating the billboard slots among the multiple advertisers. The proposed solution methodologies for the regret minimization problem include several local search-based techniques proposed by Zhang et al. \cite{zhang2021minimizing}, Budget Effective Greedy and Randomized Greedy approaches proposed by Ali et al. \cite{ali2024minimizing}, and many more.
 
 \textbf{Counter Example 1.} \label{Counter_Example}
Consider an influence provider have six billboard slots $\mathcal{U} = \{ bs_{1}, bs_{2}, bs_{3}, bs_{4}, bs_{5}, bs_{6}\}$ and four advertisers $\mathcal{A} = \{ a_{1}, a_{2}, a_{3}, a_{4}\}$ as reported in Table \ref{ETable:4} and \ref{ETable:5}. Now, the advertisers approach the influence provider with their required influence demand and the corresponding budget shown in Table \ref{ETable:5}. Next, the influence provider's goal is to allocate billboard slots to the advertisers and assume it follows the order be $(a_{1}, a_{2}, a_{3}, a_{4})$ for allocation. Hence, the influence provider will get full payment from the advertiser $a_{1}$ and $a_{2}$ and from the advertiser $a_{3}$; a partial payment will be made due to some penalty for not fulfilling required influence demand i,e., $ \mathcal{I}(\pi_{a_{3}}) < \sigma_{a_{3}}$. However, in the case of advertiser $a_{4}$, no billboard slots remain to allocate from the influence provider side and allocation of $a_{4}$, i.e., $\pi_{a_{4}}$ is empty as shown in Table \ref{ETable:6}. 

\begin{table}[!h]
\vspace{-0.2in} 
\begin{center}
\begin{minipage}{0.48\textwidth}
\small
   \centering
   \begin{tabular}{| c | c | c | c | c | c | c |}
   \hline
   $\mathcal{BS}_{i}$ & $bs_{1}$ & $bs_{2}$ & $bs_{3}$ & $bs_{4}$ & $bs_{5}$ & $bs_{6}$ \\ \hline
   $\mathcal{I}(bs_{i})$ & 2 & 6 & 3 & 7 & 1 & 1 \\ \hline
   Cost & \$4 & \$12 & \$6 & \$14 & \$2 & \$2 \\ \hline
   \end{tabular}
   \caption{\label{ETable:4} Billboard Info.}
\end{minipage}
\hfill
\begin{minipage}{0.48\textwidth}
\small
   \centering
   \begin{tabular}{ | c | c | c | c | c |}
   \hline
   Advertiser($\mathcal{A}$) & $a_{1}$ & $a_{2}$ & $a_{3}$ & $a_{4}$ \\ \hline
   Demand($I_{i}$) & 5 & 7 & 8 & 3 \\ \hline
   Budget($L_{i}$) & \$15 & \$15 & \$17 & \$6 \\ \hline
   \end{tabular}
   \caption{\label{ETable:5} Advertiser Info.}
\end{minipage}
\vspace{0.1cm} 
\begin{minipage}{0.96\textwidth}
\small
   \centering
   \begin{tabular}{ | c | c | c | c | c |}
   \hline
   $\mathcal{A}$ & $a_{1}$ & $a_{2}$ & $a_{3}$ & $a_{4}$ \\ \hline
   $\mathcal{BS}_{i}$ & $bs_{2}$ & $bs_{4}$ & $bs_{1},bs_{3},bs_{5},bs_{6}$ & - \\ \hline
   $\mathcal{I}(\mathcal{BS}_{i}) - I_{i}$ & 1 & 0 & -1 & 0 \\ \hline
   Satisfied & Yes & Yes & No & No \\ \hline
   \end{tabular}
   \caption{\label{ETable:6} Billboard Slot Allotment.}
\end{minipage}
\end{center}
\vspace{-0.5in} 
\end{table}

\par The proposed solution approaches in the existing literature will return an allocation of slots. However, this allocation may not always be practically usable. Consider the example described in Example \ref{Counter_Example} containing the billboard database, advertiser database, influence demand, and budget of the advertisers. Now, if we apply the Spatial Clustering-based approach proposed by Ali et al. \cite{ali2022influential} on this problem instance, the allocation of slots to the advertisers is shown in Table \ref{ETable:6}. It can be observed that advertiser $a_3$ and $a_{4}$ have not been satisfied. Hence, this allocation may not be considered fair as $a_3$ and $a_{4}$ may envy with $a_1$ or $a_{2}$. Given a set of indivisible goods and a set of agents, it remained a central question how we can fairly divide the goods among the agents for decades. A significant amount of research has been done on this problem. However, it is surprising that the existing literature does not address the issue of fairness in the context of billboard slot allocation. To tackle this problem, in this paper, we pose the slot allocation problem as a problem of fair division of indivisible items, which is a very well-studied problem in Fair Division literature.  In particular, we make the following contributions in this paper:
\begin{itemize}
    \item We introduce the problem of the Fair Billboard Slot Allocation Problem for which no existing literature exists.
    \item In this problem context, we introduce maximin fairness notions and study its properties in our problem context.
    \item We propose an approximation algorithm that provides a fair solution to the billboard slot allocation problem.
    \item We implement the proposed solution methodologies with real-life datasets, and several experiments have been conducted to show the effectiveness and efficiency of our solutions. 
\end{itemize}

The rest of the paper is organized as follows. Section \ref{Sec:BPD} describes required background concepts and defines the problem formally. Section \ref{Sec:PS} describes the proposed solution approach with illustration and analysis. Section \ref{Sec:Experimental_Evaluation} contains the experimental evaluations of the proposed solution approaches. Section \ref{Sec:CFD} concludes this study and gives future research directions.

\section{Background and Problem Definition} \label{Sec:BPD}
\subsection{Notation and Preliminaries}
This section introduces the background of the problem and presents its formal definition. Let us assume, there is a set of $\ell$ billboard $\mathcal{B}=\{b_1, b_2, \ldots, b_\ell\}$. Each one of them are operating for the time interval $[t_1,t_2]$, and we assume that $T = t_{2}- t_{1}$. Consider each billboard is allocated slot-wise for advertising, and each slot has a time duration of $\Delta$.  Assuming that a person $u_i$ crosses a billboard, $b_i$ at a time $t_i$. Now assume that the advertisement content of an E-Commerce house is displayed on that billboard in the slot $[t_x,t_y]$, and $t_i \in [t_x,t_y]$. Then $u_i$ is likely to be influenced by the advertisement content with a certain probability. The billboard $b_i$ will influence the trajectory $t_j$ with probability $Pr(b_i,t_j)$. One of the way to calculate this value as,  $Pr(b_i,u_i) = \frac{Size(b_i)}{\underset{b_{i} \in \mathcal{B}}{max} \ Size(b_{i})} $ where $Size(b_{i})$ is the billboard panel size. We adopt this probability setting \cite{zhang2018trajectory,zhang2019optimizing,zhang2020towards} in our experiments as well. However, it can be calculated in several ways depending on the needs of applications \cite {zhang2020towards,ali2022influential,ali2023influential}. For any positive integer $n$, $[n]$ denotes the $\{1,2,\ldots, n\}$. Next, we discuss the trajectory and billboard database. 

\begin{definition} [Trajectory Database] \label{Def:TD}
A trajectory database $\mathcal{D}$, contains the tuples of the following form: $<u_{id}, \texttt{u\_loc},\texttt{time-slot}>$ where $u_{id}, \texttt{u\_loc}$, and $\texttt{time-slot}$ denotes user id, user location and slot duration, respectively.
\end{definition}

The presence of the tuple $<u_{1}, \texttt{Republic\_Airport},[1200,1400]>$ in the trajectory database $\mathcal{D}$, means that the person with the unique id $u_{1}$ was present at \texttt{Republic\_Airport} during the time interval from $1200$ to $1400$. Next, we describe the Billboard Database in Definition \ref{Def:BD}.
 
\begin{definition}[Billboard Database]\label{Def:BD}
A Billboard database $\mathbb{B}$ is a collection of tuples of the form $<b_{id}, \texttt{b\_{loc}}, \texttt{b\_cost}>$, where $b_{id}, \texttt{b\_{loc}}$, and $\texttt{b\_cost}$ denotes billboard id, billboard location and cost, respectively.
 \end{definition}
If $\mathbb{B}$ contains a tuple $<b_{139}, \texttt{Republic\_Airport}, 25>$ means that billboard $b_{139}$ is already placed in the \texttt{Republic\_Airport} and $\$25$ is the cost for renting this billboard for one slot. Next, we describe the notion of billboard slots.

\begin{definition}[Billboard Slot]\label{Def:BS}
Given billboard database $\mathbb{B}$, billboard slot represent as a tuple of the form $(b_i,[t,t+\Delta])$ where $b_i \in \mathcal{B}$ and $t \in \{T_1, T_1+\Delta+1, T_1+2 \Delta+1, \ldots, \frac{T_2-T_1}{\Delta} - \Delta+1 \}$.
\end{definition}
We denote the set of billboard slots as $\mathbb{BS}$. For any subset $\mathcal{S} \subseteq \mathbb{BS}$, its influence $I(\mathcal{S})$ is computed via the triggering model stated in Definition \ref{IBS} as:  

\begin{definition}[Influence of Billboard Slots]\label{IBS}  
Given a subset of billboard slots $\mathcal{S}$ and a trajectory database $\mathcal{D}$, the influence of $\mathcal{S}$ can be computed using triggering model of influence shown in Equation No. \ref{Eq:Influence}.  
\begin{equation} \label{Eq:Influence}
    \mathcal{I}(\mathcal{S})= \underset{t_j \in \mathcal{D}}{\sum} 1 \ - \ \underset{b_i \in \mathbb{BS}}{\prod} (1- Pr(b_i,t_j))
\end{equation} 
where $Pr(b_i, t_j)$ is the probability of $t_j$ being influenced by $b_i$.  
\end{definition}  

We assume $Pr(b_i, t_j)$ is known for all $b_i \in \mathbb{BS}$ and $t_j \in \mathcal{D}$, following methods from \cite{zhang2020towards,zhang2021minimizing}. Lemma \ref{Lemma:1} states a key property of $\mathcal{I}()$.  

\begin{lemma} \label{Lemma:1}  
The influence function $\mathcal{I}()$ over a trajectory database $\mathbb{D}$ and billboard database $\mathbb{B}$ is non-negative, monotonic, and submodular.  
\end{lemma}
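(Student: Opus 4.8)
The plan is to reduce the claim to a statement about a single trajectory. Writing $P_j(\mathcal{S}) := \prod_{b_i \in \mathcal{S}}\bigl(1 - Pr(b_i,t_j)\bigr)$ and $f_j(\mathcal{S}) := 1 - P_j(\mathcal{S})$ for each $t_j \in \mathcal{D}$, Equation~(\ref{Eq:Influence}) reads $\mathcal{I}(\mathcal{S}) = \sum_{t_j \in \mathcal{D}} f_j(\mathcal{S})$. Because a finite sum of non-negative (resp.\ monotone, resp.\ submodular) set functions is again non-negative (resp.\ monotone, resp.\ submodular), it suffices to prove the three properties for a single $f_j$. Non-negativity is immediate: every factor $1 - Pr(b_i,t_j)$ lies in $[0,1]$ since $Pr(b_i,t_j)\in[0,1]$, so $P_j(\mathcal{S})\in[0,1]$, hence $f_j(\mathcal{S})\ge 0$; in particular $f_j(\emptyset)=1-1=0$, so $\mathcal{I}(\emptyset)=0$.

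For monotonicity, fix $\mathcal{S}\subseteq\mathcal{T}$; adjoining to the product the extra factors indexed by $\mathcal{T}\setminus\mathcal{S}$, each of which is at most $1$, can only decrease it, so $P_j(\mathcal{T})\le P_j(\mathcal{S})$ and therefore $f_j(\mathcal{T})\ge f_j(\mathcal{S})$. For submodularity I will verify the diminishing-returns inequality: for $\mathcal{S}\subseteq\mathcal{T}$ and a slot $b\notin\mathcal{T}$, a one-line computation with $p:=Pr(b,t_j)$ gives
\begin{equation*}
f_j(\mathcal{S}\cup\{b\}) - f_j(\mathcal{S}) = P_j(\mathcal{S}) - (1-p)\,P_j(\mathcal{S}) = p\,P_j(\mathcal{S}),
\end{equation*}
and likewise the marginal gain at $\mathcal{T}$ equals $p\,P_j(\mathcal{T})$. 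Since $P_j(\mathcal{T})\le P_j(\mathcal{S})$ by the monotonicity step and $p\ge 0$, the marginal gain at $\mathcal{S}$ dominates that at $\mathcal{T}$; summing over all $t_j\in\mathcal{D}$ then yields $\mathcal{I}(\mathcal{S}\cup\{b\})-\mathcal{I}(\mathcal{S})\ge\mathcal{I}(\mathcal{T}\cup\{b\})-\mathcal{I}(\mathcal{T})$, which is submodularity of $\mathcal{I}(\cdot)$.

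I expect no serious obstacle: $\mathcal{I}(\cdot)$ is the familiar ``probabilistic coverage'' objective, and all three properties follow from elementary manipulations of products of numbers in $[0,1]$. The only points needing care are (i) reading the product in Equation~(\ref{Eq:Influence}) as ranging over the argument set $\mathcal{S}$ rather than over all of $\mathbb{BS}$, and (ii) observing that the closed-form marginal gain $p\,P_j(\cdot)$ is itself monotone non-increasing in the ``context'' set --- which is exactly what drives the submodularity inequality.
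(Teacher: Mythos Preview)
Your proof is correct and complete. Note, however, that the paper does not actually supply a proof of this lemma: it is stated without argument, presumably because the result is standard for the triggering/probabilistic-coverage model and is implicitly deferred to the cited prior work \cite{zhang2018trajectory,zhang2019optimizing,zhang2020towards}. Your decomposition into per-trajectory functions $f_j$ and the closed-form marginal $p\,P_j(\cdot)$ is exactly the textbook route for this class of functions, so there is nothing to contrast. Your caveat~(i) is well placed: as written, Equation~(\ref{Eq:Influence}) has the product ranging over all of $\mathbb{BS}$ rather than over the argument $\mathcal{S}$, which would make $\mathcal{I}(\mathcal{S})$ constant; you are right to read it as $\prod_{b_i\in\mathcal{S}}$, since otherwise the lemma (and the entire paper) would be vacuous.
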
  

\subsection{Billboard Advertisement and the Payment Model}
As mentioned previously, a set of $n$ advertisers $\mathcal{A}=\{a_1, a_2, \ldots, a_n\}$ approaches to an influence provider $\mathcal{X}$ for an amount of influence by paying some amount of payment. For any advertiser $a_i \in \mathcal{A}$, the following query is sent to the influence provider: $(a_i, \sigma_i, u_i)$. This signifies that the advertiser $a_i$ is asking for an influence of amount $\sigma_i$ in exchange for the utility $u_i$. In response to this query, the influence provider allocates a subset of the billboard slots $\pi_{i} \subseteq \mathbb{BS}$. In return, the advertiser $a_i$ gives a payment $\mathcal{P}_{i}$ to the influence provider. Now, we define the `Allocation of Billboard Slots', and `The Payment Model' in Definition \ref{Def:Allocation} and \ref{Def:Paymet}, respectively. 
\begin{definition} [Allocation of Billboard Slots] \label{Def:Allocation}
    Given a set of billboard slots $\mathbb{BS}$, an allocation of these slots among $n$ advertisers is given as $\Pi = (\pi_1, \pi_2, \ldots, \pi_n)$ where $\pi_i$ contains the billboard slots allocated to the advertiser $a_{i}$. An allocation must follow the following properties:
    \begin{itemize}
        \item for all $i,j \in [n]$ and $i \neq j$, $\pi_i \cap \pi_j = \emptyset$.
        and  $\underset{i \in [n]}{\bigcup} \ \pi_{i}=\mathbb{BS}$ 
    \end{itemize}
\end{definition}
\begin{definition} [The Payment Model] \label{Def:Paymet}
    For any advertiser $a_i \in \mathcal{A}$, depending upon the obtained influence from the influence provider, the payment of amount $\mathcal{P}(a_i)$ will be made to the influence provider and the following conditional equation defines it. 
 \[
    \mathcal{P}(a_i)= 
\begin{cases}
    u_i,& \text{if } I(\pi_i)\geq \sigma_i \\
    u_i \cdot (1 - \gamma \cdot \frac{I(\pi_i)}{\sigma_i}),              & \text{otherwise}
\end{cases}
\]
In the above equation, the quantity $\frac{I(\pi_i)}{\sigma_i}$ refers to the fraction of satisfied influence by required influence, and $\gamma$ is the penalty ratio whose value lies in $[0,1]$, due to the unsatisfied influence demand.
\end{definition}
Now, we define the utility of an advertiser as follows:
\begin{definition} [Utility of an Advertiser] \label{Def:Utility}
    For an advertiser $a_i \in \mathcal{A}$, let $v_i$ denote its monetary valuation per unit influence. Hence, the utility of the advertiser $a_i$ is defined as $Utility(a_i)= v_i \cdot I(\pi_i) - \mathcal{P}(a_i)$
\end{definition}
 \begin{lemma} \label{Lemma:monotone}
The utility function $Utility(.)$ is monotone by nature.
 \end{lemma}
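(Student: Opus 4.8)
The plan is to read ``monotone by nature'' as monotonicity of $Utility(a_i)$ as a set function of the bundle $\pi_i$: if $\pi_i \subseteq \pi_i'$, then the utility under $\pi_i'$ is at least that under $\pi_i$. First I would reduce the set-function statement to a one-dimensional one. By Lemma \ref{Lemma:1}, $\mathcal{I}(\cdot)$ is monotone, so $\pi_i \subseteq \pi_i'$ forces $\mathcal{I}(\pi_i) \le \mathcal{I}(\pi_i')$; since $v_i$, $u_i$, $\sigma_i$, $\gamma$ are fixed parameters of advertiser $a_i$, it therefore suffices to show that $Utility(a_i)$, regarded as a function of the scalar $x = \mathcal{I}(\pi_i) \ge 0$ through Definitions \ref{Def:Paymet} and \ref{Def:Utility}, is non-decreasing in $x$.

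Next I would substitute the payment model and simplify on each of its two branches. On the satisfied branch $x \ge \sigma_i$ we get $Utility(a_i) = v_i x - u_i$, an affine function with non-negative slope $v_i$, hence non-decreasing. On the unsatisfied branch $x < \sigma_i$ we get $Utility(a_i) = v_i x - u_i\bigl(1 - \gamma \tfrac{x}{\sigma_i}\bigr) = \bigl(v_i + \tfrac{u_i \gamma}{\sigma_i}\bigr) x - u_i$, again affine with non-negative slope because $u_i, \gamma, \sigma_i \ge 0$; hence it is non-decreasing as well. So inside each satisfaction regime the claim is immediate, and when $\pi_i$ and $\pi_i'$ lie in the same regime we are done.

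The remaining step -- and the one I expect to be the main obstacle -- is gluing the two pieces at the threshold $x = \sigma_i$, i.e. the case where enlarging the bundle pushes the influence from below $\sigma_i$ to at or above it. There I would compare the left limit of the unsatisfied branch, $v_i\sigma_i - u_i(1-\gamma)$, with the value $v_i\sigma_i - u_i$ of the satisfied branch at $\sigma_i$: since the payment rises by $u_i\gamma$ the instant the demand is met, the utility actually exhibits a downward jump of size $u_i\gamma$ at $x=\sigma_i$ unless $\gamma = 0$. Handling this cleanly is the crux; I would close it either by specializing to $\gamma = 0$, by stating monotonicity relative to a fixed satisfaction status, or by imposing a mild structural condition tying $v_i$ to $u_i$ so that the marginal valuation gained across the last slots before saturation dominates the forfeited penalty. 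Once the junction is resolved, patching the two monotone affine pieces gives monotonicity of $Utility(a_i)$ on all of $[0,\infty)$, and composing with Lemma \ref{Lemma:1} transfers it back to the set function, completing the argument.
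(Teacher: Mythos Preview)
The paper does not actually supply a proof of this lemma; it is stated without justification immediately after Definition~\ref{Def:Utility}. So there is nothing in the paper to compare your argument against.

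Your reduction via Lemma~\ref{Lemma:1} to the one-dimensional function of $x=\mathcal{I}(\pi_i)$ and the piecewise-affine analysis on the two branches of Definition~\ref{Def:Paymet} is the natural approach, and your computations are correct. In particular, the downward jump of size $u_i\gamma$ at the threshold $x=\sigma_i$ that you flag is real: with the payment model exactly as written, the payment drops from $u_i$ at $x=0$ to $u_i(1-\gamma)$ as $x\to\sigma_i^-$ and then jumps back to $u_i$ once $x\ge\sigma_i$, so utility is \emph{not} monotone in $\pi_i$ unless $\gamma=0$ or some additional structural assumption is imposed. None of the three fixes you propose (specializing to $\gamma=0$, restricting to a fixed satisfaction regime, or tying $v_i$ to $u_i$) appears anywhere in the paper. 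The obstacle you identify is therefore not a gap in your argument but a gap in the lemma as stated; your write-up already contains more than the paper does on this point.
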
 

\subsection{Maximin Fair Share}
The maximin fair share of an advertiser $a_{i}$, with the influence function $\mathcal{I}()$ is defined as the maximum over all the allocations, the value of the worst share that $a_{i}$ get according to his allocations. Formally, we defined the notion of maximin fair in Definition \ref{max-min}.
\begin{definition}[Maximin Fair Share]\label{max-min}
Let $\mathcal{A}$ be a set of advertisers, $\mathbb{BS}$ be a set of billboard slots, $\mathcal{U} = \{u_{1},u_{2},\ldots u_{n}\}$ be the set of the budget each advertiser spent for their required influence for billboard slots, $\Pi = (\pi_{1},\pi_{2},\ldots,\pi_{n})$ be an allocation for $\mathcal{A}$ and $\mathbb{BS}$. Now, for the advertiser $a_{i}$ maximin fair share can be denoted as $\mathcal{M}_{i}$ and defined as,
\begin{equation} \label{Eq:Eq_max-min}
  v_{i} \cdot \mathcal{I}(\pi_{i}) - \mathcal{P}(a_i) \geq \max_{\Pi} \min_{j} v_{i} \cdot \mathcal{I}(\pi_{j}) - \mathcal{P}(a_j), \forall~ i, j \in [n] ~and~i \neq j.
\end{equation}
We say an allocation $\Pi$ satisfies the max-min fair share criterion if for all $a_{i} \in \mathcal{A}$ satisfy Equation No. \ref{Eq:Eq_max-min}. 
\end{definition}
\subsection{Problem Definition}
Formally, we call our problem a Fair Billboard Slot Allocation Problem (FBSA Problem). This problem is solved from the influence provider perspective by allocating the billboard slots to the advertisers. The input to this problem is the campaign quotation for all the advertisers: $(a_i, \sigma_i, u_i)_{i=1}^{n}$ and the access to the trajectory $\mathcal{D}$ and billboard database $\mathbb{B}$. An arbitrary instance of the FBSA problem is denoted as $I=<(a_i, \sigma_i, u_i)_{i=1}^{n}, \mathcal{D}, \mathbb{B}>$. Given an input instance, the FBSA Problem asks to partition the billboard slots to allocate to the advertisers so that the defined fairness criteria are satisfied. The output to this problem is $(\Pi, \mathcal{P})$ where $\Pi$ is the allocation and $\mathcal{P}$ is the payment vector. From the computational point of view, the FBSA Problem can be represented as follows: 

\begin{tcolorbox}
\underline{\textbf{FBSA Problem}}\\
\textbf{Input:} The campaign proposal of the advertisers $(a_i, \sigma_i, u_i)_{i=1}^{n}$, the trajectory database $\mathcal{D}$, and the billboard database $\mathbb{B}$. \\
\textbf{Problem:} Create an allocation $\Pi=\{\pi_1, \pi_2, \ldots, \pi_n\}$ such that the max-min fair share is satisfied.
\vspace{0.2 cm}\\
\textbf{Output:} $\Pi=\{\pi_1, \pi_2, \ldots, \pi_n\}$ and $\mathcal{P}=\{ \mathcal{P}(a_1), \mathcal{P}(a_2), \ldots, \mathcal{P}(a_n)\}$.
\end{tcolorbox}

\section{Proposed Solution Approaches} \label{Sec:PS}
In this section, we show the existence of a $\frac{1}{3}(1 - \frac{1}{e}-\epsilon)$-approximate maximin fair allocation if the advertisers have a submodular valuation, in polynomial time. We know that finding the maximin share is an NP-hard problem \cite{barman2018finding,10.1145/3033274.3085136}. Hence, instead of using actual maximin share $\mathcal{M}_{a_i}$, we execute Algorithm \ref{alg:RR_submodular} with threshold $\delta_{a_i}$ for all $a_{i} \in \mathcal{A}$. The round-robin approach in Algorithm \ref{alg:RR_submodular} takes $\delta_{i}s$ as a threshold that allocates highly influential billboard slots (w.r.t $\delta_{i}$) as a singleton bundle and distributes the remaining slots to the advertisers in a round-robin manner. The main contribution is to show for each advertiser $a_{i}$, if the $\delta_{a_i} \leq \mathcal{M}_{a_i}$, the bundle $\pi_{i}$ is allocated to $a_{i}$ by Algorithm \ref{alg:RR_submodular} and satisfies $\mathcal{I}^{i}(\pi_{i}) \geq \frac{1}{3}(1 - \frac{1}{e}-\epsilon)\cdot \delta_{a_i}$. It is relevant to note that this guarantee independently holds for all the advertisers as long as $\delta_{a_i} \leq \mathcal{M}_{a_i}$. Formally, we established this guarantee in Theorem \ref{Th:1}. Next, Algorithm \ref{alg:Submodular} starts by initializing thresholds $\delta_{a_i}$s to be greater than the maximum share of the advertisers. We consider that the advertiser whose maximum share is zero has been removed. If the allocation provided by Algorithm \ref{alg:RR_submodular} does not satisfy $\mathcal{I}^{i}(\pi_{i}) \geq \frac{1}{3}(1 - \frac{1}{e}-\epsilon)\cdot \delta_{a_i}$, it decreases the threshold value for all the advertisers. This geometric decrease of the thresholds is justified as previously we initialized the threshold $\delta_{a_i} > \mathcal{M}_{a_i}$. This claim is stated in Lemma \ref{Lemma:1}.
\vspace{-0.1in}
\begin{algorithm}[!h]
\scriptsize
\caption{RoundRobin Approach for Submodular Valuation}\label{alg:RR_submodular}
\SetAlgoLined
\KwData{An instance over $m$ billboard slots and $n$ advertisers, with influence Function $\mathcal{I}()$.}
\KwResult{Allocation $\Pi = (\pi_1, \pi_2, \dots, \pi_n)$ such that $\mathcal{I}^{i}(\pi_{i}) \geq \frac{1}{3}(1 - \frac{1}{e} - \epsilon)\cdot \delta_{a_i}$, for an advertiser $a_i \in [n]$ which satisfies $\delta_{a_i} \leq \mathcal{M}_{a_i}$}.
Initialize set of advertisers $\mathcal{A} = [n]$ and set of slots $\mathcal{B} = [m]$\;
\While{$\text{there exist advertiser~} a_i \in \mathcal{A}$ \text{and slot~} $j \in \mathcal{B}$ \text{such that $\mathcal{I}^{i}(j) \geq \frac{1}{3}(1 - \frac{1}{e}-\epsilon)\cdot \delta_{a_i}$}}{
$\pi_{i} \leftarrow \{j\}$\;
$\mathcal{A} \leftarrow \mathcal{A} \setminus a_{i}$\;
$\mathcal{B} \leftarrow \mathcal{B} \setminus \{j\}$\;
}
Assume via reindexing, that the set of remaining advertisers $\mathcal{A} = \{a_{1},a_{2}, \ldots ,|\mathcal{A}|\}$\;
\While{$\mathcal{B} \neq \emptyset$}{
\For{$i = 1 \text{~to~} |\mathcal{A}|$}{
$\mathcal{R} \leftarrow$ a random subset obtained by sampling $\frac{|\mathcal{B}|}{|\mathcal{A}|} \log \frac{1}{\epsilon}$ random elements from $\mathcal{B} \setminus \mathcal{A}$\;
Pick $g^* \in \arg\max_{j \in \mathcal{R}} \mathcal{I}^{i}(\pi_{i} \cup \{j\}) - \mathcal{I}^{i}(\pi_{i})$\;
$\pi_i \gets \pi_{i} \cup \{g^*\}$\;
$\mathcal{B} \gets \mathcal{B} \setminus \{g^*\}$\;
}}
\Return Allocation $\Pi = (\pi_1, \pi_2, \dots, \pi_n)$
\end{algorithm}
\vspace{-0.1in}

\paragraph{\textbf{Complexity Analysis.}} Now, we analyze the time and space requirement of Algorithm \ref{alg:RR_submodular}. In-Line No. $1$, initialization of advertisers and slots will take $\mathcal{O}(n)$ time. In-Line, No. $2$ \texttt{while loop} will execute for $\mathcal{O}(n)$ time. Hence, Line no $2$ to $6$ will take $\mathcal{O}(n)$ in the worst case. Re-indexing the advertiser will take  $\mathcal{O}(n)$ time. Next, Line No. $8$ and $9$ the \texttt{while loop} and \texttt{for loop} will execute for $\mathcal{O}(m)$ and $\mathcal{O}(n)$ time, respectively. In-Line No. $10$ sampling $\frac{|\mathcal{B}|}{|\mathcal{A}|} \log \frac{1}{\epsilon}$ many slots will take $\mathcal{O}( m \cdot n \cdot \frac{|\mathcal{B}|}{|\mathcal{A}|} \log \frac{1}{\epsilon})$ and Line No. $11$ calculate influence will take $\mathcal{O}(m^{2} \cdot n \cdot |\mathcal{R}| \cdot t)$, where $t$ is the number of tuple in the trajectory database. So, Line No. $8$ to $15$ will take $\mathcal{O}(m \cdot n \cdot \frac{|\mathcal{B}|}{|\mathcal{A}|} \log \frac{1}{\epsilon} + m^{2} \cdot n \cdot |\mathcal{R}| \cdot t + m \cdot n)$ i.e.,  $\mathcal{O}(m \cdot n \cdot \frac{|\mathcal{B}|}{|\mathcal{A}|} \log \frac{1}{\epsilon} + m^{2} \cdot n \cdot |\mathcal{R}| \cdot t)$. Now, the additional space requirement to store advertisers, slots, and allocation will take $\mathcal{O}(n)$, $\mathcal{O}(m)$, and $\mathcal{O}(n)$, respectively. Hence, the total space requirement is $\mathcal{O}(m+n)$.

\begin{theorem}
The time and space requirement for Algorithm \ref{alg:RR_submodular} will be $\mathcal{O}(m \cdot n \cdot \frac{|\mathcal{B}|}{|\mathcal{A}|} \log \frac{1}{\epsilon} + m \cdot n \cdot |\mathcal{R}| \cdot t)$ and $\mathcal{O}(m+n)$, respectively. 
\end{theorem}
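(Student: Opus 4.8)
The plan is to carry out a line-by-line accounting of Algorithm~\ref{alg:RR_submodular}, bounding first the number of iterations of each loop and then the per-iteration work. First I would note that Line~1 builds the index sets $\mathcal{A}$ and $\mathcal{B}$ in $\mathcal{O}(n)$ and $\mathcal{O}(m)$ time. The first \texttt{while} loop (Lines~2--6) deletes at least one advertiser \emph{and} one slot in every pass, so it runs $\mathcal{O}(\min\{m,n\}) = \mathcal{O}(n)$ times, and the re-indexing in Line~7 is a further $\mathcal{O}(n)$; these contribute only lower-order terms. The bulk of the cost is the nested loop in Lines~8--15: each execution of the \texttt{for} body (Lines~10--13) deletes exactly one slot from $\mathcal{B}$, and we charge the outer \texttt{while} $\mathcal{O}(m)$ passes and the inner \texttt{for} $\mathcal{O}(n)$ iterations per pass, so $\mathcal{O}(m\cdot n)$ executions of the body.

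Next I would cost one execution of the \texttt{for} body. Line~10 samples $\frac{|\mathcal{B}|}{|\mathcal{A}|}\log\frac{1}{\epsilon}$ elements, which is $\mathcal{O}\!\left(\frac{|\mathcal{B}|}{|\mathcal{A}|}\log\frac{1}{\epsilon}\right)$ time. Line~11 evaluates the marginal gain $\mathcal{I}^{i}(\pi_i\cup\{j\})-\mathcal{I}^{i}(\pi_i)$ for each of the $|\mathcal{R}|$ sampled slots; by Definition~\ref{IBS} a single value of $\mathcal{I}()$ is obtained by a pass over the $t$ tuples of $\mathcal{D}$, maintaining for each tuple the running product $\prod_{b_i\in\pi_i}(1-Pr(b_i,t_j))$ over the already-placed slots so that adding one slot is an $\mathcal{O}(1)$ update, hence the argmax costs $\mathcal{O}(|\mathcal{R}|\cdot t)$; Lines~12--13 are $\mathcal{O}(1)$. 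Summing over the $\mathcal{O}(m\cdot n)$ body executions yields $\mathcal{O}\!\left(m\cdot n\cdot\frac{|\mathcal{B}|}{|\mathcal{A}|}\log\frac{1}{\epsilon}\right)$ from Line~10 and $\mathcal{O}(m\cdot n\cdot|\mathcal{R}|\cdot t)$ from Line~11; since the $\mathcal{O}(m+n)$ initialization and re-indexing terms are dominated, the total time is $\mathcal{O}\!\left(m\cdot n\cdot\frac{|\mathcal{B}|}{|\mathcal{A}|}\log\frac{1}{\epsilon}+m\cdot n\cdot|\mathcal{R}|\cdot t\right)$, as claimed.

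For the space bound I would observe that the algorithm keeps only the advertiser set ($\mathcal{O}(n)$), the slot set ($\mathcal{O}(m)$), the allocation $\Pi$, which is a partition of the $m$ slots and hence needs $\mathcal{O}(m)$ words, the transient sample set $\mathcal{R}$ of size $\mathcal{O}\!\left(\frac{|\mathcal{B}|}{|\mathcal{A}|}\log\frac{1}{\epsilon}\right)=\mathcal{O}(m)$, and the per-tuple running products, an extra $\mathcal{O}(t)$; absorbing (or assuming $t=\mathcal{O}(m+n)$) this last term, the working space is $\mathcal{O}(m+n)$.

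The step I expect to be the main obstacle --- indeed the only part that is not purely mechanical --- is pinning down the cost of the influence oracle on Line~11: a naive recomputation of $\mathcal{I}^{i}(\pi_i\cup\{j\})$ from scratch would cost $\mathcal{O}(m\cdot t)$ per sampled slot and inflate the second term to $\mathcal{O}(m^2\cdot n\cdot|\mathcal{R}|\cdot t)$, so the stated bound implicitly relies on the incremental bookkeeping above. I would therefore make that bookkeeping explicit, verify that each marginal-gain query is then only $\mathcal{O}(t)$, and double-check the $\mathcal{O}(m\cdot n)$ iteration count for Lines~8--15 against the tighter $\mathcal{O}(m)$ bound that follows from ``one slot removed per body execution'' to confirm which is being charged in the final expression.
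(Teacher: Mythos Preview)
Your proposal is correct and follows the same line-by-line accounting as the paper's own \textbf{Complexity Analysis} paragraph: initialize in $\mathcal{O}(n)$, first \texttt{while} in $\mathcal{O}(n)$, re-indexing in $\mathcal{O}(n)$, then charge the nested loop $\mathcal{O}(m)\times\mathcal{O}(n)$ iterations with per-iteration cost split between the sampling in Line~10 and the marginal-gain evaluations in Line~11, and $\mathcal{O}(m+n)$ space for the three data structures. One point worth flagging: the paper's analysis paragraph actually writes the Line~11 contribution as $\mathcal{O}(m^{2}\cdot n\cdot |\mathcal{R}|\cdot t)$ (naive recomputation of $\mathcal{I}$) and then silently drops a factor of $m$ in the theorem statement; your explicit incremental-product bookkeeping is precisely what is needed to justify the $\mathcal{O}(m\cdot n\cdot |\mathcal{R}|\cdot t)$ bound the theorem claims, so on that step you are in fact more careful than the paper itself.
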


\vspace{-0.3in}
\begin{algorithm}[!h]
\scriptsize
\caption{Approximate Maximin Fair Allocation for Submodular Valuation}\label{alg:Submodular}
\SetAlgoLined
\KwData{An instance over $m$ billboard slots and $n$ advertisers, with influence Function $\mathcal{I}()$.}
\KwResult{An allocation $\Pi = (\pi_1, \pi_2, \dots, \pi_n)$ such that, $\mathcal{I}^{i}(\pi_{i}) \geq \frac{1}{3}(1 - \frac{1}{e}- \epsilon)\cdot \mathcal{M}_{a_i}$ for all $a_i \in [n].$ Here, $\lambda \in (0,1)$ is arbitrary small constant.}
For all $a_i \in [n]$, initialize $\delta_{a_i} =\mathcal{I}^{i}([m])$ and $\pi_{i} = \emptyset$\;
Initialize $\mathcal{Y} = \{a_i \in [n] ~|~ \mathcal{I}^{i}(\pi_{i}) < \frac{1}{3}(1 - \frac{1}{e}- \epsilon)\cdot \delta_{a_i}\}$ i.e., $\mathcal{Y} = [n]$\;
\While{$\mathcal{Y} \neq \emptyset$}{
For all $a_i \in \mathcal{Y}$, update $\delta_{a_i} \leftarrow \frac{1}{1+\lambda}\cdot \delta_{a_i}$\;
Update the allocation by executing Algorithm \ref{alg:RR_submodular} with current $\delta$ value: $(\pi_1, \pi_2, \dots, \pi_n) \leftarrow RoundRobin(\delta_{1}, \delta_{2}, \ldots, \delta_{n})$\;
Update $\mathcal{Y} = \{a_i \in [n] ~|~ \mathcal{I}^{i}(\pi_{i}) < \frac{1}{3}(1 - \frac{1}{e}-\epsilon)\cdot \delta_{a_i}\}$\;
}
\Return Allocation $\Pi = (\pi_1, \pi_2, \dots, \pi_n)$
\end{algorithm}
\vspace{-0.3in}
\paragraph{\textbf{Complexity Analysis.}} Now, we analyze the time and space requirements for Algorithm \ref{alg:Submodular}. In-Line No. $1$, for initialization of $\delta$, $\pi$  will take $\mathcal{O}(n)$ and $\mathcal{O}(n)$, respectively. In-Line No. $2$, initialization of $\mathcal{Y}$ will take $\mathcal{O}(m \cdot n \cdot t)$, where $t$ is the number of tuples in the trajectory database. In-Line No. $3$, \texttt{while loop} will run for $\mathcal{O}(n)$ and updating the $\delta$ value in Line No. $4$ will take $\mathcal{O}(n^{2})$ time. In-Line No. $5$ the round-robin will take $\mathcal{O}(m \cdot n^{2} \cdot \frac{|\mathcal{B}|}{|\mathcal{A}|} \log \frac{1}{\epsilon} + m \cdot n^{2} \cdot |\mathcal{R}| \cdot t)$ time and updating $\mathcal{Y}$ will take $\mathcal{O}(m \cdot n^{2} \cdot t)$ time. Hence, Line No. $3$ to $7$ will take $\mathcal{O}(n^{2} + m \cdot n^{2} \cdot \frac{|\mathcal{B}|}{|\mathcal{A}|} \log \frac{1}{\epsilon} + m \cdot n^{2} \cdot |\mathcal{R}| \cdot t + m \cdot n^{2} \cdot t)$ time. So, the total time taken by Algorithm \ref{alg:Submodular} will be $\mathcal{O}(n + m \cdot n \cdot t + m \cdot n^{2} \cdot \frac{|\mathcal{B}|}{|\mathcal{A}|} \log \frac{1}{\epsilon} + m \cdot n^{2} \cdot |\mathcal{R}| \cdot t + m \cdot n^{2} \cdot t)$ i.e.,  $\mathcal{O}(m \cdot n^{2} \cdot \frac{|\mathcal{B}|}{|\mathcal{A}|} \log \frac{1}{\epsilon} + m \cdot n^{2} \cdot |\mathcal{R}| \cdot t)$ time to execute. Next, the additional space requirements will be $\mathcal{O}(m+n)$ to store slots and advertisers.
\begin{theorem}
The time and space requirement for Algorithm \ref{alg:Submodular} will be  $\mathcal{O}(m \cdot n^{2} \cdot \frac{|\mathcal{B}|}{|\mathcal{A}|} \log \frac{1}{\epsilon} + m \cdot n^{2} \cdot |\mathcal{R}| \cdot t)$ and $\mathcal{O}(m+n)$, respectively. 
\end{theorem}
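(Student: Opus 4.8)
The plan is to bound the running time of Algorithm \ref{alg:Submodular} by a line-by-line accounting, treating a single invocation of the round-robin subroutine as a black box whose cost is the time bound already established for Algorithm \ref{alg:RR_submodular} above. First I would dispatch the one-time work performed before the loop. Line~$1$ assigns $\delta_{a_i}$ and $\pi_i$ for every advertiser, contributing $\mathcal{O}(n)$. Line~$2$ builds $\mathcal{Y}$ by testing the inequality $\mathcal{I}^{i}(\pi_i) < \frac{1}{3}(1-\frac{1}{e}-\epsilon)\cdot\delta_{a_i}$ for each $a_i$; since each influence evaluation scans the trajectory database and the $\mathcal{O}(m)$ slots, this costs $\mathcal{O}(m\cdot n\cdot t)$, where $t$ is the number of tuples in $\mathcal{D}$.

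Next I would bound the \texttt{while} loop of Lines~$3$--$7$. The structural fact I would invoke is that the thresholds are shrunk geometrically by the factor $\frac{1}{1+\lambda}$ and that, once $\delta_{a_i}\le\mathcal{M}_{a_i}$, the guarantee of Algorithm \ref{alg:RR_submodular} forces $a_i$ out of $\mathcal{Y}$; together with the initialization $\delta_{a_i}=\mathcal{I}^{i}([m])>\mathcal{M}_{a_i}$ (the claim discussed just before Algorithm \ref{alg:Submodular}), this shows the loop terminates after $\mathcal{O}(n)$ iterations. Within a single iteration, Line~$4$ rescales at most $n$ thresholds in $\mathcal{O}(n)$ time, Line~$5$ calls Algorithm \ref{alg:RR_submodular} at cost $\mathcal{O}(m\cdot n\cdot\frac{|\mathcal{B}|}{|\mathcal{A}|}\log\frac{1}{\epsilon} + m\cdot n\cdot|\mathcal{R}|\cdot t)$, and Line~$6$ rebuilds $\mathcal{Y}$ in $\mathcal{O}(m\cdot n\cdot t)$. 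Multiplying by the $\mathcal{O}(n)$ iteration count and adding the pre-loop terms gives $\mathcal{O}(n + m\cdot n\cdot t + n^{2} + m\cdot n^{2}\cdot\frac{|\mathcal{B}|}{|\mathcal{A}|}\log\frac{1}{\epsilon} + m\cdot n^{2}\cdot|\mathcal{R}|\cdot t + m\cdot n^{2}\cdot t)$, in which the terms $m\cdot n^{2}\cdot\frac{|\mathcal{B}|}{|\mathcal{A}|}\log\frac{1}{\epsilon}$ and $m\cdot n^{2}\cdot|\mathcal{R}|\cdot t$ dominate, yielding the claimed time bound.

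For the space bound I would observe that the persistent data structures are the advertiser set, the slot set, the threshold vector $\delta$, the set $\mathcal{Y}$, and the allocation $\Pi$: the first, third and fourth are $\mathcal{O}(n)$, the slot set is $\mathcal{O}(m)$, and $\Pi$ is $\mathcal{O}(m+n)$ since each slot lies in exactly one bundle. The working space of the round-robin subroutine is itself $\mathcal{O}(m+n)$ and is reclaimed between successive calls, so the overall requirement is $\mathcal{O}(m+n)$. The one delicate point is the $\mathcal{O}(n)$ bound on the number of \texttt{while}-loop iterations: a naive count of the geometric descent gives roughly $\log_{1+\lambda}\bigl(\mathcal{I}^{i}([m])/\mathcal{M}_{a_i}\bigr)$ rounds, so the stated figure implicitly treats $\lambda$ and that ratio as constants (or relies on the amortized reasoning sketched before Algorithm \ref{alg:Submodular}); making this dependence precise is the step I expect to require the most care, and it is the only nontrivial ingredient — the remaining bookkeeping is routine.
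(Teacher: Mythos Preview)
Your proposal is correct and follows essentially the same line-by-line accounting as the paper: $\mathcal{O}(n)$ for Line~1, $\mathcal{O}(m\cdot n\cdot t)$ for Line~2, an asserted $\mathcal{O}(n)$ bound on the \texttt{while}-loop iterations, and per-iteration costs obtained by black-boxing Algorithm~\ref{alg:RR_submodular}, with the two dominant terms surviving the simplification. Your caveat about the $\mathcal{O}(n)$ iteration count is well placed --- the paper simply asserts it in the complexity paragraph and only later (in the proof of Theorem~\ref{Th:1}) mentions the $\log_{(1+\lambda)}(\mathcal{I}^{i}([m])/\mathcal{M}_{a_i})$ expression, implicitly treating that quantity as bounded.
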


\begin{lemma}\label{Lemma:maximin}
Consider a scenario with $m$ billboard slots and $n$ advertisers, where each advertiser $a_{i}$ has a valuation function $\mathcal{I}^i: 2^{[m]} \to \mathbb{R}_+$ that is nonnegative, monotone, and submodular for $1 \leq i \leq n$. Let $\Pi = (\pi_1, \pi_2, \dots, \pi_n)$ represent an allocation produced by Algorithm~\ref{alg:RR_submodular} with input thresholds $\delta_i \in \mathbb{R}_+$ for all $a_i$. If the thresholds satisfy $\delta_i \leq \mathcal{M}_{a_i}$, then the allocation guarantees:
$\mathcal{I}^i(\pi_i) \geq \frac{1}{3} \left(1 - \frac{1}{e} - \epsilon\right) \cdot \delta_{a_i} \quad \text{for all } a_i.$
\end{lemma}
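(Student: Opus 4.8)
The plan is to fix an advertiser $a_i$ with $\delta_{a_i}\le\mathcal{M}_{a_i}$ and prove the bound separately according to which of the two phases of Algorithm~\ref{alg:RR_submodular} produces $\pi_i$. If $a_i$ is assigned a singleton $\{j\}$ in the first \texttt{while} loop, then the loop guard immediately gives $\mathcal{I}^{i}(\pi_i)=\mathcal{I}^{i}(\{j\})\ge \frac{1}{3}(1-\frac{1}{e}-\epsilon)\cdot\delta_{a_i}$, so nothing more is needed. The entire content of the argument is therefore the case in which $a_i$ survives into the round-robin phase.

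For that case I would first record the two structural facts that drive everything. (i) When the first loop terminates, every remaining slot $j\in\mathcal{B}$ satisfies $\mathcal{I}^{i}(\{j\})<\frac{1}{3}(1-\frac{1}{e}-\epsilon)\cdot\delta_{a_i}$; that is, with respect to $a_i$ every slot entering the round-robin is ``small''. (ii) Since $\delta_{a_i}\le\mathcal{M}_{a_i}$, there is a partition of all $m$ slots into $n$ bundles $P_1,\dots,P_n$ with $\mathcal{I}^{i}(P_k)\ge\delta_{a_i}$ for every $k$; because each iteration of the first loop deletes exactly one advertiser and exactly one slot, at most $n-|\mathcal{A}|$ slots are removed before the round-robin begins, so at least $|\mathcal{A}|$ of the bundles $P_k$ lie entirely inside $\mathcal{B}$. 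A pigeonhole over these surviving bundles yields one, call it $P^{*}$, with $|P^{*}|\le |\mathcal{B}|/|\mathcal{A}|$ and $\mathcal{I}^{i}(P^{*})\ge\delta_{a_i}$. Since $|P^{*}|$ is an integer, $|P^{*}|\le\lfloor |\mathcal{B}|/|\mathcal{A}|\rfloor$, and in the round-robin $a_i$ is given at least $\lfloor |\mathcal{B}|/|\mathcal{A}|\rfloor\ge |P^{*}|$ opportunities to append a slot to $\pi_i$.

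The heart of the proof is then a greedy-and-sampling accounting over $a_i$'s turns. Suppose, for contradiction, that $\mathcal{I}^{i}(\pi_i)<\frac{1}{3}(1-\frac{1}{e}-\epsilon)\cdot\delta_{a_i}$ at the end. Monotonicity gives $\mathcal{I}^{i}(\pi_i\cup P^{*})\ge\delta_{a_i}$, so the residual value $\mathcal{I}^{i}(P^{*}\mid\pi_i)\ge\delta_{a_i}-\mathcal{I}^{i}(\pi_i)$ stays above a constant fraction of $\delta_{a_i}$ throughout the run; by fact (i), every slot of $P^{*}$ that one of the other $|\mathcal{A}|-1$ advertisers removes costs $a_i$ at most $\frac{1}{3}(1-\frac{1}{e}-\epsilon)\cdot\delta_{a_i}$ of that residual, and this is precisely what the leading factor $\frac{1}{3}$ pays for. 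At each turn of $a_i$ before the threshold is reached, submodularity implies that some still-available slot of $P^{*}$ has marginal value at least $\frac{1}{|P^{*}|}\,\mathcal{I}^{i}(P^{*}\cap\mathcal{B}\mid\pi_i)$; moreover the random sample $\mathcal{R}$ of size $\frac{|\mathcal{B}|}{|\mathcal{A}|}\log\frac{1}{\epsilon}$ contains such a ``good'' slot with probability at least $1-\epsilon$, because there are at least $|\mathcal{A}|$ good slots available (one coming from each surviving maximin bundle), so the miss probability is at most $(1-|\mathcal{A}|/|\mathcal{B}|)^{|\mathcal{R}|}\le\epsilon$. Feeding this into the standard greedy recursion over $a_i$'s $\ge |P^{*}|$ turns shows that $\pi_i$ accumulates at least a $(1-\frac{1}{e}-\epsilon)$ fraction of the value obtainable from $P^{*}$; after the $\frac{1}{3}$ contention charge this is at least $\frac{1}{3}(1-\frac{1}{e}-\epsilon)\cdot\delta_{a_i}$, contradicting the assumption. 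Combining the two cases gives the claimed bound for every $a_i$; randomness enters only through the sampling step, so the guarantee is really attained in expectation (equivalently, with high probability), which is consistent with the $\epsilon$-slack already present in the statement.

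The step I expect to be the main obstacle is exactly this last accounting: simultaneously controlling the $(1-\frac{1}{e})$ loss inherent to greedy submodular maximization, the $\epsilon$ loss from replacing the exact $\arg\max$ by an $\arg\max$ over the random sample $\mathcal{R}$, and the loss from slots of $P^{*}$ being taken by other advertisers before $a_i$ can reach them, and then showing that — thanks to the ``all remaining slots are small'' property and the guaranteed count $\ge |P^{*}|$ of turns for $a_i$ — these three losses together are no worse than the factor $\frac{1}{3}(1-\frac{1}{e}-\epsilon)$. Making the invariant ``a good slot of some surviving maximin bundle is still available'' precise, and bounding how many of $a_i$'s turns can elapse before $P^{*}$'s residual value is exhausted, is where the real care is needed; the two structural facts, the singleton case, and the pigeonhole step are routine by comparison.
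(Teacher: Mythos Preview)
The paper does not actually prove this lemma: it is stated and then immediately invoked in the proof of Theorem~\ref{Th:1}, with the underlying analysis implicitly deferred to the cited fair-division literature (in particular \cite{barman2018finding}); the only additional remark is the sentence about tightening from a $\frac{1}{10(1+\lambda)}$ bound, with no argument supplied. So there is no in-paper proof to compare against, and your outline is effectively a reconstruction of what such a proof would have to look like.

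Your two-phase case split, the structural facts (i) and (ii), the counting argument that at least $|\mathcal{A}|$ of the maximin bundles survive the first phase intact, and the pigeonhole extraction of some $P^{*}\subseteq\mathcal{B}$ with $|P^{*}|\le\lfloor|\mathcal{B}|/|\mathcal{A}|\rfloor$ and $\mathcal{I}^{i}(P^{*})\ge\delta_{a_i}$ are all correct and are the standard scaffolding for this type of result. The observation that $a_i$ receives at least $|P^{*}|$ turns in the round-robin is also right, and you are correct that the singleton case is immediate from the loop guard.

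The one place your sketch is genuinely shaky is the sampling step. You justify $\Pr[\mathcal{R}\text{ misses every good slot}]\le(1-|\mathcal{A}|/|\mathcal{B}|)^{|\mathcal{R}|}\le\epsilon$ by claiming that at every turn of $a_i$ there remain ``at least $|\mathcal{A}|$ good slots, one from each surviving maximin bundle.'' That invariant holds only at the \emph{start} of the round-robin; once other advertisers begin removing slots, the number of fully intact bundles drops and your count of $|\mathcal{A}|$ good slots is no longer available. The stochastic-greedy accounting you want does not rely on many disjoint good slots existing; rather, it bounds the expected marginal of $\arg\max_{j\in\mathcal{R}}$ directly against the average marginal over $P^{*}\cap\mathcal{B}$, using only that a sample of the stated size hits $P^{*}\cap\mathcal{B}$ with probability at least $1-\epsilon$ (this is where $|P^{*}|\le|\mathcal{B}|/|\mathcal{A}|$ enters). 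With that correction in place, the three-loss decomposition you describe --- the $(1-1/e)$ greedy loss, the $\epsilon$ sampling loss, and the $1/3$ contention charge paid for by the small-slot property (i) --- is the right shape for the argument, and your closing remark that the guarantee is really an expected/high-probability one is accurate. You have correctly identified the step that carries all the weight; the fix is to run the stochastic-greedy inequality against $P^{*}$ alone rather than against the collection of all surviving bundles.
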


\begin{theorem}\label{Th:1}
Consider a setting with $m$ billboard slots and $n$ advertisers, where each advertiser $a_{i}$ has a valuation function $\mathcal{I}^i: 2^{[m]} \to \mathbb{R}_+$ that is nonnegative, monotone, and submodular for $1 \leq i \leq n$. Algorithm~\ref{alg:Submodular} computes an allocation $\Pi = (\pi_1, \pi_2, \dots, \pi_n)$ that satisfies $\mathcal{I}^i(\pi_i) \geq \frac{1}{3} \left(1 - \frac{1}{e} - \epsilon \right) \cdot \mathcal{M}_{a_i}, \quad \forall~ a_i \in [n],$ and it does so in polynomial time.
\end{theorem}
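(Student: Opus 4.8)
The plan is to reduce Theorem~\ref{Th:1} to Lemma~\ref{Lemma:maximin}: Algorithm~\ref{alg:Submodular} runs a geometric search over thresholds that, for each advertiser $a_i$, eventually produces a $\delta_{a_i}$ lying in the regime $\delta_{a_i}\le\mathcal{M}_{a_i}$ handled by that lemma, while never undershooting $\mathcal{M}_{a_i}$ by more than a factor $\frac{1}{1+\lambda}$. Two structural facts underpin this. First, the initialization satisfies $\delta_{a_i}=\mathcal{I}^i([m])\ge\mathcal{M}_{a_i}$: in any partition of $[m]$ into $n$ bundles the least-valued bundle has value at most $\mathcal{I}^i([m])$ by monotonicity of $\mathcal{I}^i$, so the search starts above the maximin share. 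Second, for each $a_i$ the sequence of values taken by $\delta_{a_i}$ is non-increasing --- it is multiplied by $\frac{1}{1+\lambda}<1$ exactly in the iterations in which $a_i\in\mathcal{Y}$, and left untouched otherwise --- and strictly shrinks by that factor in every iteration in which $a_i\in\mathcal{Y}$. (An advertiser with $\mathcal{M}_{a_i}=0$ is removed in preprocessing and satisfies the bound trivially, so we take $\mathcal{M}_{a_i}>0$ throughout.)

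\textbf{Correctness.} Fix $a_i$. I claim $a_i$ leaves $\mathcal{Y}$ after finitely many iterations and never re-enters. While $a_i$ stays in $\mathcal{Y}$, $\delta_{a_i}$ keeps shrinking by $\frac{1}{1+\lambda}$, so after at most $\lceil\log_{1+\lambda}(\mathcal{I}^i([m])/\mathcal{M}_{a_i})\rceil$ such iterations it first drops to a value $\le\mathcal{M}_{a_i}$; from that point on every call to Algorithm~\ref{alg:RR_submodular} uses a threshold satisfying the hypothesis of Lemma~\ref{Lemma:maximin} for $a_i$, hence returns $\pi_i$ with $\mathcal{I}^i(\pi_i)\ge\frac{1}{3}(1-\frac{1}{e}-\epsilon)\,\delta_{a_i}$, so $a_i$ is not re-inserted into $\mathcal{Y}$ and its threshold is frozen. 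Let $T_i$ be the last iteration with $a_i\in\mathcal{Y}$ and $\delta^{*}_{a_i}$ the frozen value. The quantitative core is the lower bound $\delta^{*}_{a_i}\ge\frac{1}{1+\lambda}\mathcal{M}_{a_i}$: the threshold of $a_i$ at the \emph{start} of iteration $T_i$ must be $\ge\mathcal{M}_{a_i}$, for otherwise Lemma~\ref{Lemma:maximin}, applied to the allocation computed in the previous iteration and using that thresholds never increase, would already have certified $a_i$ as satisfied and excluded it from $\mathcal{Y}$ at the start of $T_i$; the single multiplication by $\frac{1}{1+\lambda}$ in iteration $T_i$ then leaves $\delta^{*}_{a_i}\ge\frac{1}{1+\lambda}\mathcal{M}_{a_i}$. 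Finally, Algorithm~\ref{alg:Submodular} halts only when $\mathcal{Y}=\emptyset$, i.e.\ when the current allocation satisfies $\mathcal{I}^i(\pi_i)\ge\frac{1}{3}(1-\frac{1}{e}-\epsilon)\,\delta_{a_i}$ for \emph{every} $i$ at once; together with the lower bound on $\delta^{*}_{a_i}$ this gives $\mathcal{I}^i(\pi_i)\ge\frac{1}{3}(1-\frac{1}{e}-\epsilon)\cdot\frac{1}{1+\lambda}\,\mathcal{M}_{a_i}$ for the returned allocation. Taking $\lambda$ small enough folds the factor $\frac{1}{1+\lambda}$ into $\epsilon$, since $\frac{1}{3}(1-\frac{1}{e}-\epsilon)\cdot\frac{1}{1+\lambda}\ge\frac{1}{3}(1-\frac{1}{e}-\epsilon')$ with $\epsilon'=\epsilon+\mathcal{O}(\lambda)$, which yields the claimed guarantee.

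\textbf{Running time.} Every pass of the while loop strictly decreases the threshold of at least one advertiser (one that certifies $\mathcal{Y}\neq\emptyset$), and each advertiser's threshold can be decreased at most $\lceil\log_{1+\lambda}(\mathcal{I}^i([m])/\mathcal{M}_{a_i})\rceil$ times before entering and remaining in the good regime; since the influence values are rationals of polynomial bit-length, summing over advertisers bounds the number of passes by a polynomial in the input size and $1/\lambda$. The per-pass work is dominated by one invocation of Algorithm~\ref{alg:RR_submodular} together with polynomial-time bookkeeping, and Algorithm~\ref{alg:RR_submodular} runs in polynomial time by the complexity analysis already established; hence Algorithm~\ref{alg:Submodular} terminates in polynomial time.

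\textbf{Expected main obstacle.} The delicate part is the Correctness step. Because Algorithm~\ref{alg:Submodular} rebuilds the \emph{entire} allocation from scratch at every pass, one must (a) rule out an advertiser oscillating in and out of $\mathcal{Y}$, and (b) show that the guarantee secured for $a_i$ the moment it leaves $\mathcal{Y}$ still holds for the \emph{final} returned allocation, even though the other advertisers' bundles keep changing afterward. Both are resolved by noting that ``$a_i\notin\mathcal{Y}$'' is precisely the inequality $\mathcal{I}^i(\pi_i)\ge\frac{1}{3}(1-\frac{1}{e}-\epsilon)\delta_{a_i}$ re-checked at the end of every pass, combined with monotonicity of the thresholds and Lemma~\ref{Lemma:maximin}, which together prevent re-entry and pin the frozen threshold below by $\frac{1}{1+\lambda}\mathcal{M}_{a_i}$. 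A lesser point is verifying that $\log_{1+\lambda}(\mathcal{I}^i([m])/\mathcal{M}_{a_i})$ is polynomially bounded, i.e.\ that $\mathcal{M}_{a_i}$ is not exponentially smaller than $\mathcal{I}^i([m])$, which follows from the bounded bit-complexity of the probabilities $Pr(b_i,t_j)$ defining $\mathcal{I}^i$.
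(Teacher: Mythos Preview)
Your proposal is correct and follows essentially the same approach as the paper's own proof: initialize thresholds at $\mathcal{I}^i([m])\ge\mathcal{M}_{a_i}$, geometrically decrease them until Lemma~\ref{Lemma:maximin} kicks in, argue that $\delta_{a_i}$ never drops below $\frac{1}{1+\lambda}\mathcal{M}_{a_i}$, and bound the number of iterations by $\log_{1+\lambda}(\mathcal{I}^i([m])/\mathcal{M}_{a_i})$. Your treatment is in fact more careful than the paper's, which asserts the key lower bound $\delta_{a_i}\ge\frac{1}{1+\lambda}\mathcal{M}_{a_i}$ without explicitly addressing the re-entry issue you spell out in the ``Expected main obstacle'' paragraph.
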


\begin{proof}
In Algorithm~\ref{alg:Submodular}, the initial threshold for each advertiser $a_{i}$ is set to $\delta_i = \mathcal{I}^i([m])$, which is guaranteed to be at least the advertiser’s maximin share, $\mathcal{M}_{a_i}$. For each advertiser $a_{i}$, Algorithm \ref{alg:Submodular} never decrement $\delta_{a_i}$ below $\frac{1}{1+\lambda}\cdot \mathcal{M}_{a_i}$ and when Algorithm \ref{alg:Submodular} terminates, for every advertiser $a_{i}$, the $\delta_{a_i}$ satisfies $\delta_{a_i} \geq \frac{1}{1+\lambda}\cdot \mathcal{M}_{a_i}$ and $\mathcal{I}^{i}(\pi_{i}) \geq \frac{1}{3}(1 - \frac{1}{e}-\epsilon)\cdot \mathcal{M}_{a_i}$ since $\mathcal{Y} = \emptyset$ at termination. Hence, Algorithm~\ref{alg:Submodular} achieves the desired approximate fairness guarantee. We tighten the analysis of Lemma \ref{Lemma:maximin} to obtain the approximation guarantee better than $1/3$. With a constant $\lambda \in (0,1)$ we achieve the approximation guarantee of $1/3$ (as reported in Theorem \ref{Th:1}) instead of slightly worst bound $\frac{1}{10(1+\lambda)}$. Furthermore, we can bound the running time of Algorithm \ref{alg:Submodular} by initializing $\delta_{a_i} = \mathcal{I}^{i}([m])$ and As stated in Section \ref{Sec:PS} in Algorithm \ref{alg:Submodular}, the advertisers whose maximin share is zero is removed. The maximum number of times advertiser $a_{i}$, can reside in set $\mathcal{Y}$ is $\log_{(1+\lambda)}(\frac{ \mathcal{I}^{i}([m])}{\mathcal{M}_{a_i}})$. Hence, the overall bound ensures that Algorithm \ref{alg:Submodular} will run in polynomial time.
\end{proof}
\section{Experimental Evaluation}\label{Sec:Experimental_Evaluation}
\paragraph{\textbf{Dataset Description and Setup.}}
We consider the datasets used in the existing literature \cite{ali2022influential,ali2023influential,10.1145/3605098.3636052} are  New York City (NYC)\footnote{\url{https://www.nyc.gov/site/tlc/about/tlc-trip-record-data.page}} and Los Angeles (LA)\footnote{\url{https://github.com/Ibtihal-Alablani}}.The NYC dataset contains 227,428 check-ins recorded from April 12, 2012, to February 16, 2013, while the LA dataset includes 74,170 check-ins collected from 15 specific streets in Los Angeles. In addition, billboard data was obtained from various locations in NYC and LA through LAMAR\footnote{\url{http://www.lamar.com/InventoryBrowser}}, a billboard provider. The billboard dataset for NYC comprises 1,031,040 slots, whereas the dataset for LA contains 2,135,520 slots. An HP Z4 workstation with 64 GB of RAM and an Xeon(R) 3.50 GHz CPU runs all of the Python code of our experiments.

\paragraph{\textbf{Key Parameters.}}
All the key parameters are summarized in Table \ref{Key-parameters}. First, the Demand Supply Ratio ($\alpha$) is the ratio of global influence demand to supply, \(\alpha = \sigma^{\mathcal{T}} / \sigma^{*}\), where $\sigma^{\mathcal{A}} = \sum_{i=1}^{k} \sigma_{i}$ and $\sigma^{*} = \sum_{b \in \mathcal{BS}} \mathcal{I}(b)$. Second, the Average Individual Demand Ratio ($\beta$) is the ratio of average individual demand to the influence supply, $\beta = \sigma^{\mathcal{A}^{''}} / \sigma^{*}$  where $\sigma^{\mathcal{A}^{''}} = \sigma^{\mathcal{A}} / |\mathcal{A}|$ is the average individual influence demand. Third, the Advertiser Demand ($\mathcal{I}$) Once the $\beta$ value is fixed, $\sigma^{\mathcal{A}^{''}}$ can be easily derived as $ \sigma^{\mathcal{A}^{''}} = \beta \cdot \sigma^{*}$. Subsequently, we can generate the demand of each advertiser as $\mathcal{I}_{i} = \lfloor \omega \cdot \sigma^{*} \cdot \beta \rfloor$, where $\omega \in [0.8,1.2]$. Fourth, Following the existing literature \cite{zhang2021minimizing,zhang2019optimizing,zhang2020towards}, we model the slot cost as proportional to its influence: \( Cost(bs) = \left\lfloor \tau \times \frac{\mathcal{I}(bs)}{10} \right\rfloor \), where \(\tau \in [0.9, 1.1]\). Fifth, Following existing literature \cite{zhang2021minimizing,ali2024minimizing}, advertisers payments are set as $u_{i} = \lfloor \psi \cdot \sigma_{i} \rfloor$, where $\psi \in [0.9, 1.1]$, with the advertiser's budget $\mathcal{B} = \sum_{i=1}^{n} u_{i}$. Sixth, we vary the accuracy speed-up parameter, $\epsilon$, from $0.1$ to $0.9$ to sample out slots from larger slots set. We vary $\theta$ from $25m$ to $150m$, and $\theta$ denotes the distance billboard slots can influence trajectories. In every set of experiments, we vary only one parameter and set the remaining as the default setting (highlighted in bold).
\vspace{-0.3in}
\begin{table}[h!]
\caption{\label{Key-parameters} Key Parameters}
\vspace{-0.15 in}
\begin{center}
    \begin{tabular}{ | p{2cm}| p{5.5cm}|}
    \hline
    Parameter & Values  \\ \hline
    $\alpha$ & $40\%, 60\%, 80\%, \textbf{100\%}, 120\%$   \\ \hline
    $\beta$ & $1\%, 2\%, \textbf{5\%}, 10\%, 20\%$   \\ \hline
    $\gamma$ & $0, 0.25, \textbf{0.5}, 0.75, 1\%$   \\ \hline
    $\epsilon$ & $0.1, \textbf{0.3}, 0.5, 0.7, 0.9$   \\ \hline
    $\theta$ & $25m,50m,\textbf{100m},125m,150m$  \\ \hline
    \end{tabular}
\end{center}
\vspace{-0.3in}
\end{table}



\paragraph{\textbf{Baseline Methods.}} Now, we will discuss the baseline methods used in our experiment to compare with the proposed approach as follows.
\paragraph{\textbf{Simple Greedy Allocation.}}
In simple greedy allocation, billboard slots are allocated to the advertiser one by one based on the marginal gain computation till the influence and budget demand is satisfied.
\paragraph{\textbf{Random Allocation.}}
Billboard slots are randomly selected to allocate to the advertiser until the influence demand and the budget constraint are satisfied.
\paragraph{\textbf{Top-$k$ Allocation.}}
Billboard slots are sorted based on their influence value in this approach. Next, sorted slots are allocated to the advertiser till the influence demand and the budget constraints are satisfied.

\paragraph{\textbf{Goals of our Experiments.}} \label{Sec:Research_Questions}
The following research questions (RQ) are our focus in this study.
\begin{itemize}
\item \textbf{RQ1}: Varying $\alpha$, $\beta$ how the utility of the advertisers varies.
\item \textbf{RQ2}: Varying $\alpha$, $\beta$ how the number of satisfied advertisers varies.
\item \textbf{RQ3}: Varying $\alpha$, $\beta$ and $\epsilon$ how the computational time varies.
\end{itemize}

\paragraph{\textbf{Experimental Results with Discussions.}}
Now, we discuss the experimental results of the proposed solution methodologies and address the research questions mentioned in the research questions.
\begin{figure*}[!ht]
\centering
\begin{tabular}{ccc}
\includegraphics[scale=0.175]{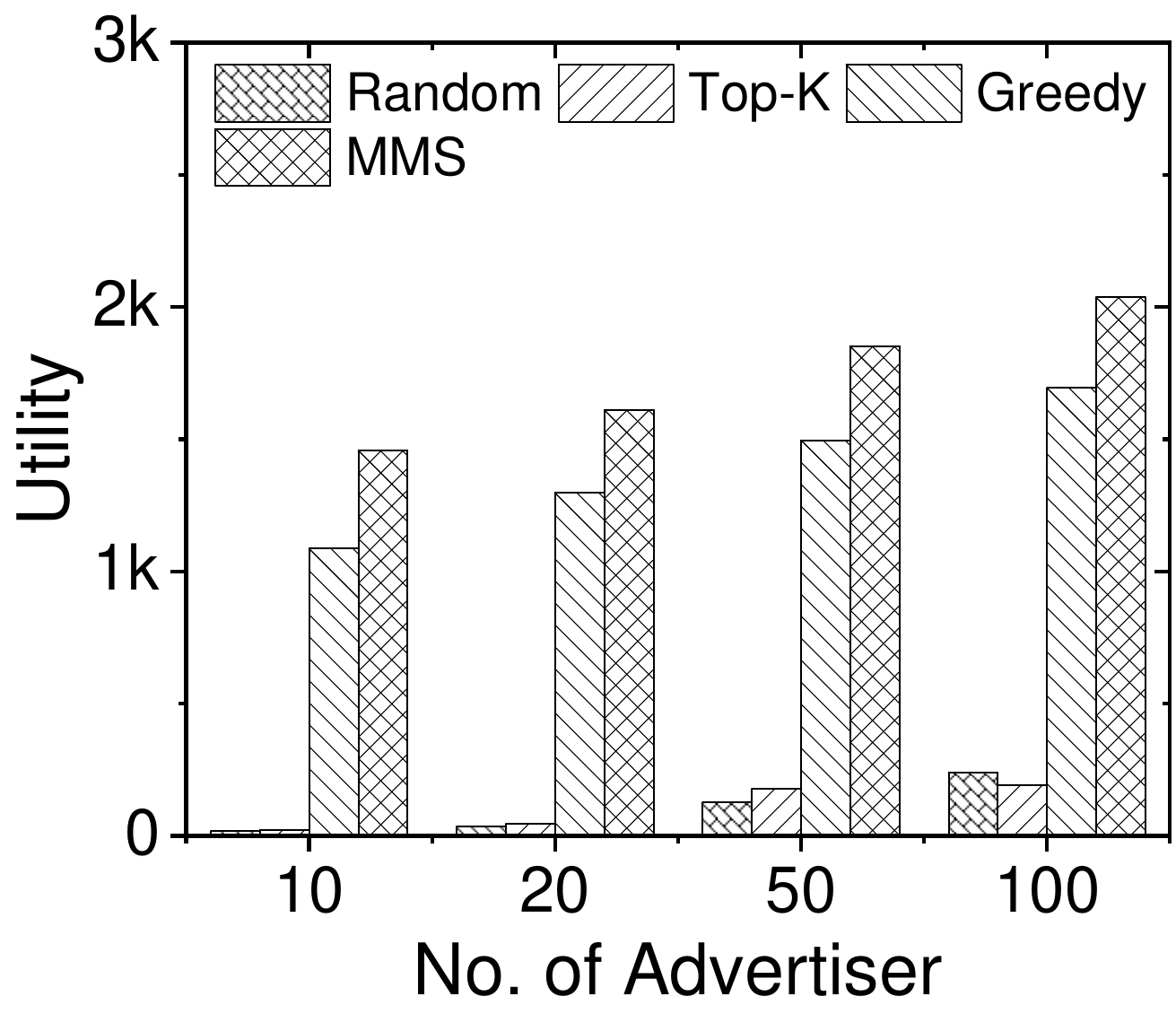} & \includegraphics[scale=0.175]{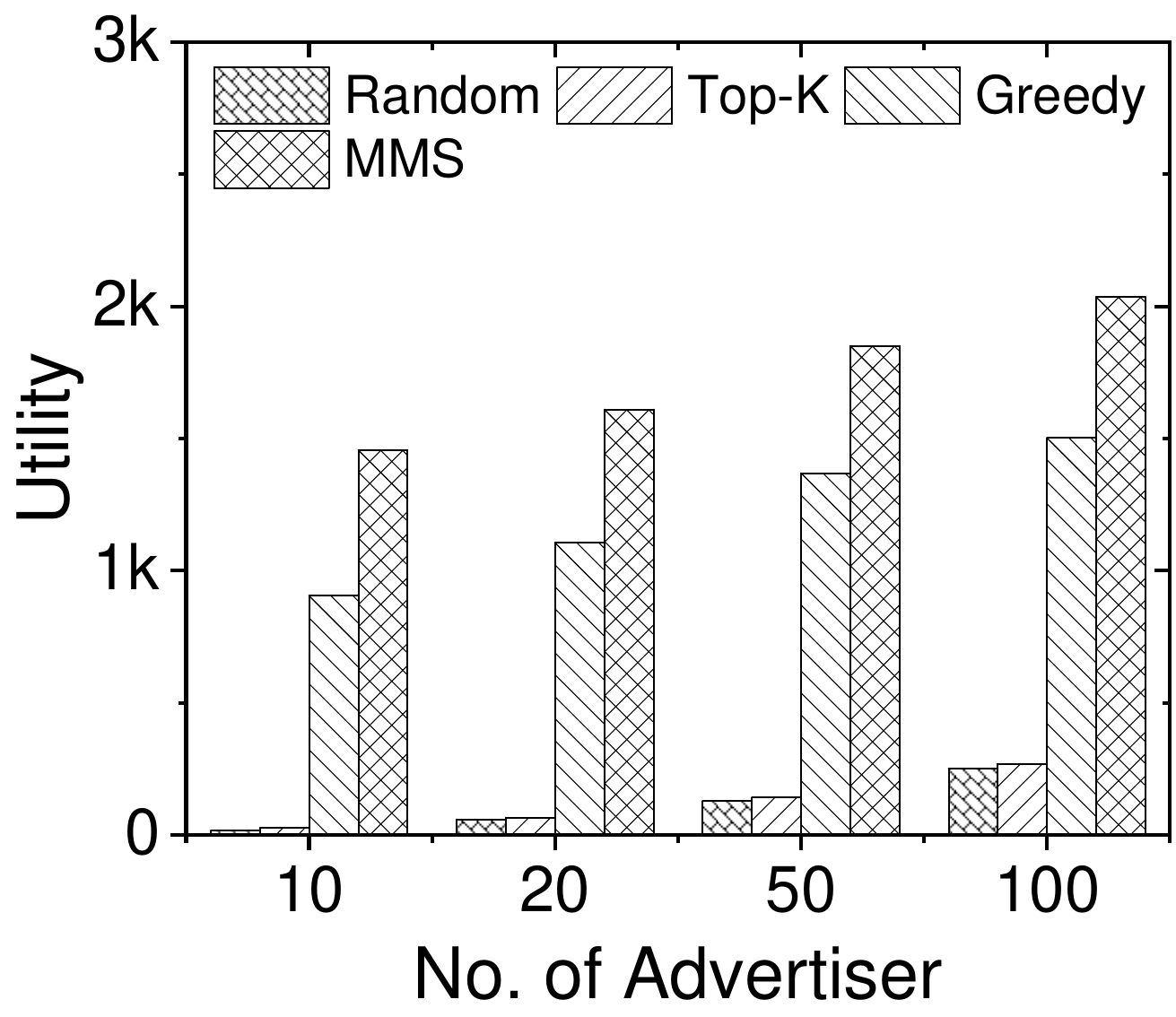} & \includegraphics[scale=0.175]{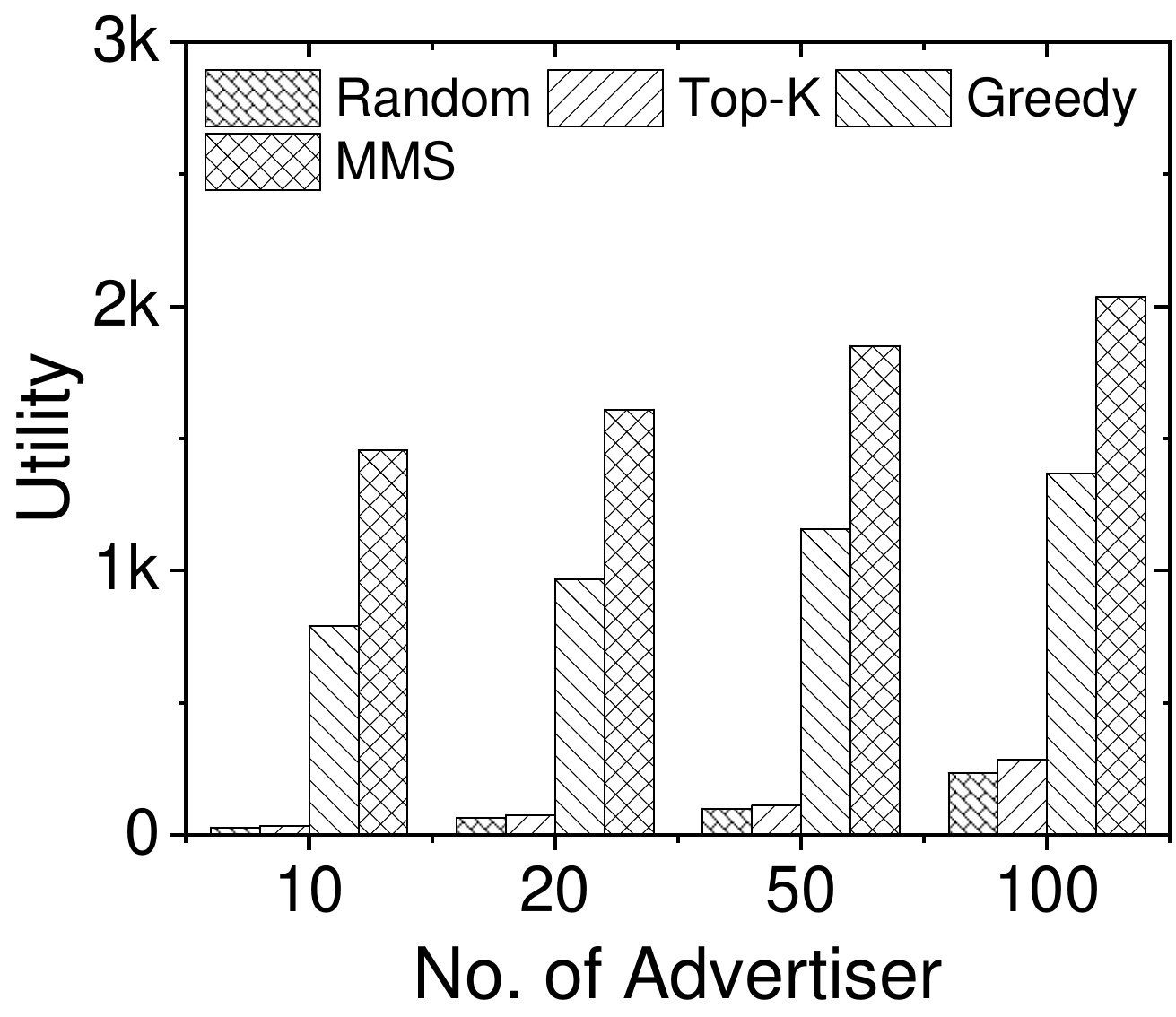} \\
\tiny{(a) Varying $|\mathcal{A}|, \beta ~\text{with}~ \alpha = 40\%$} &  \tiny{(b) Varying $|\mathcal{A}|, \beta ~\text{with}~ \alpha = 60\%$} & \tiny{(c) Varying $|\mathcal{A}|, \beta ~\text{with}~ \alpha = 80\%$}  \\
\includegraphics[scale=0.175]{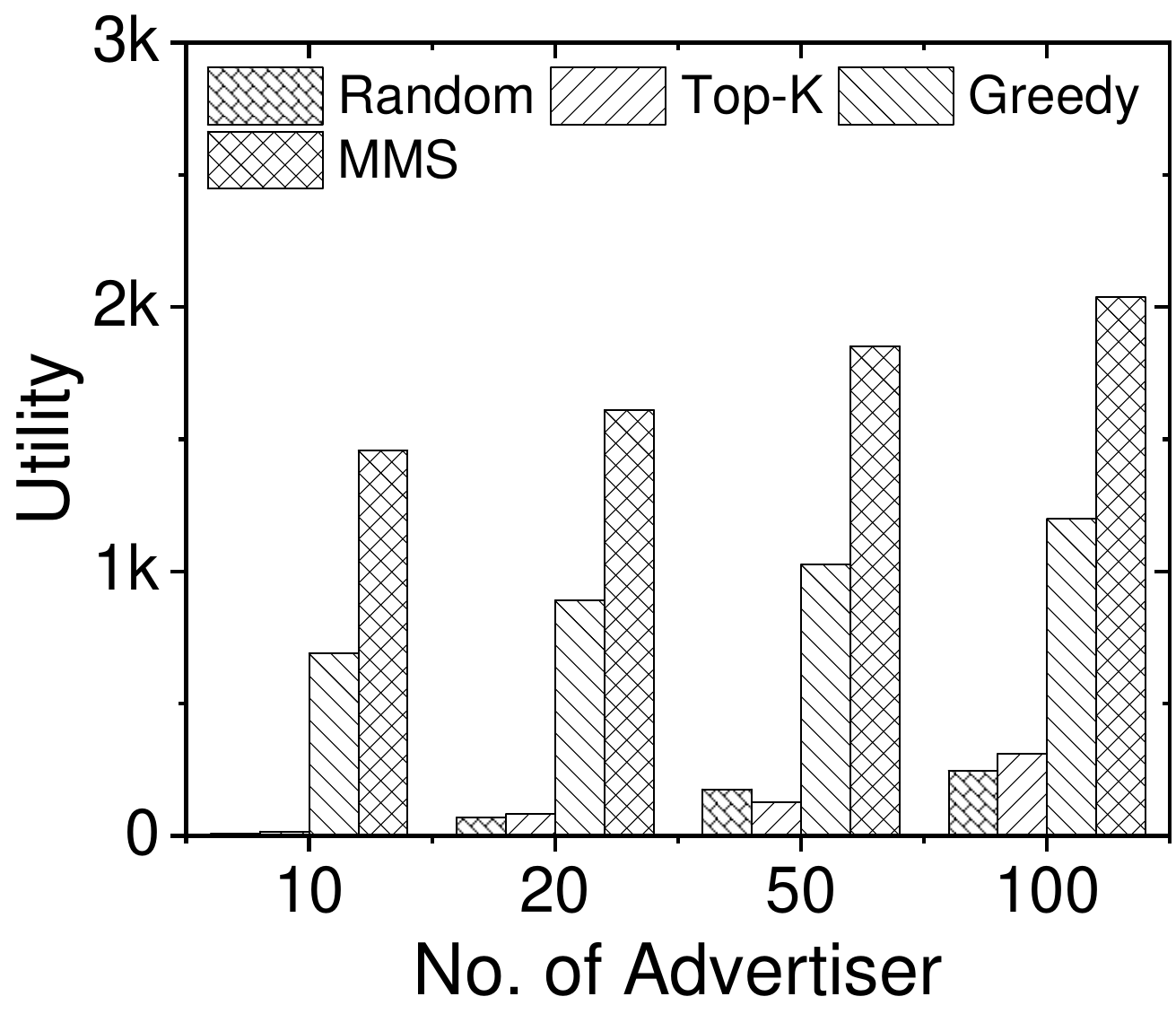} & \includegraphics[scale=0.175]{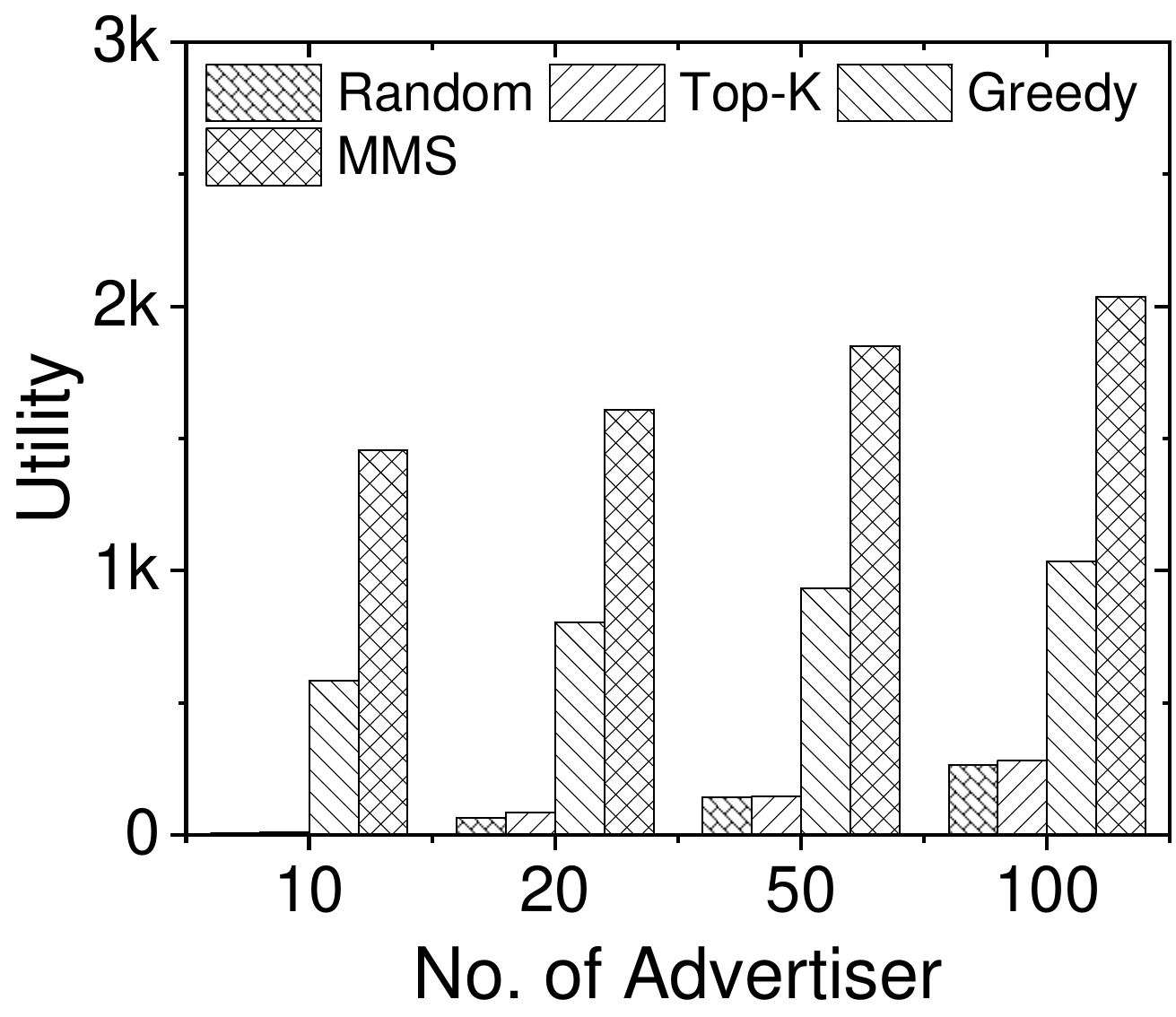} & \includegraphics[scale=0.175]{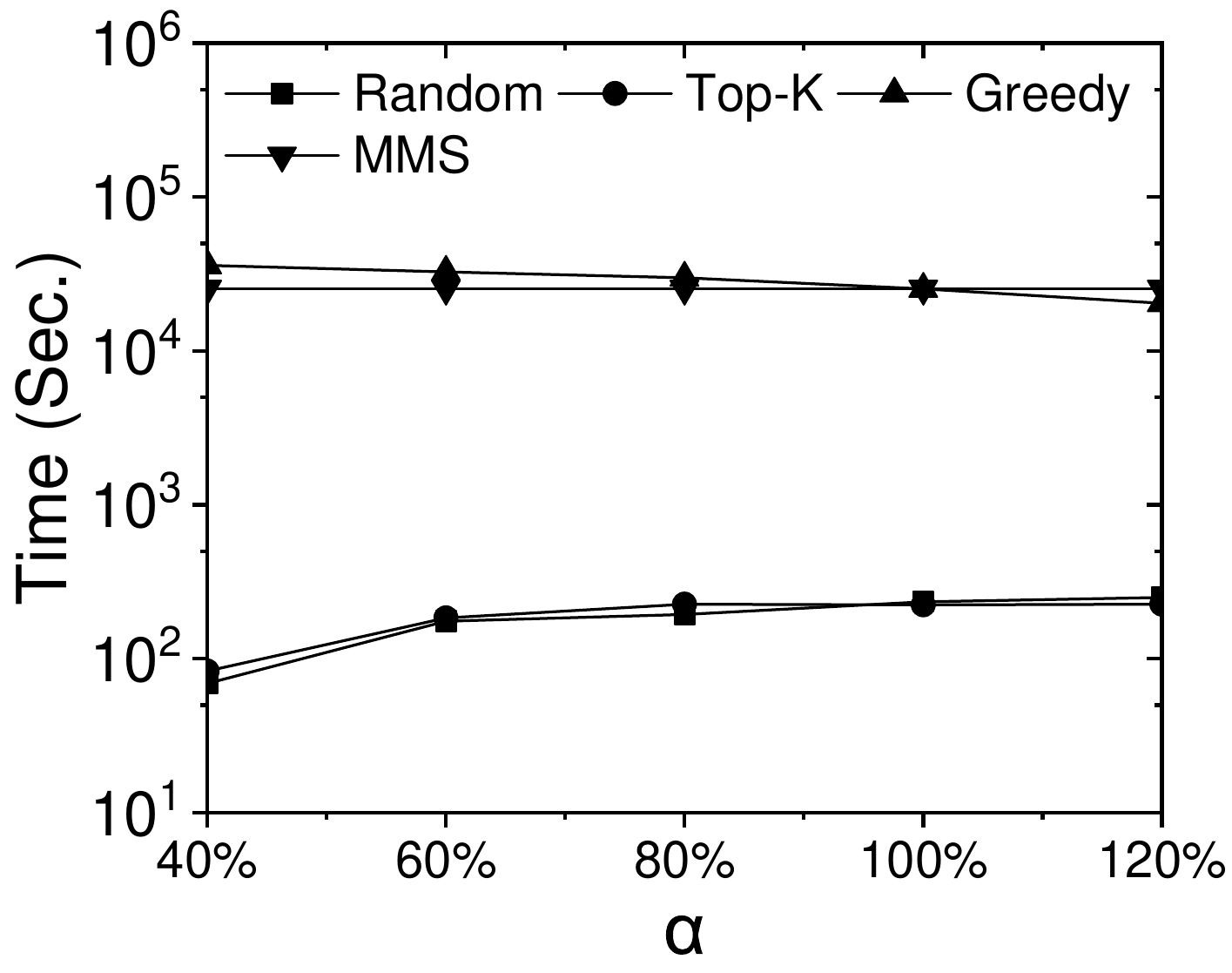}  \\
\tiny{(d)Varying $|\mathcal{A}|, \beta ~\text{with}~ \alpha = 100\%$} & \tiny{(e)  Varying $|\mathcal{A}|, \beta ~\text{with}~ \alpha = 120\%$}  & \tiny{(f) Runtime in NYC} \\
\includegraphics[scale=0.175]{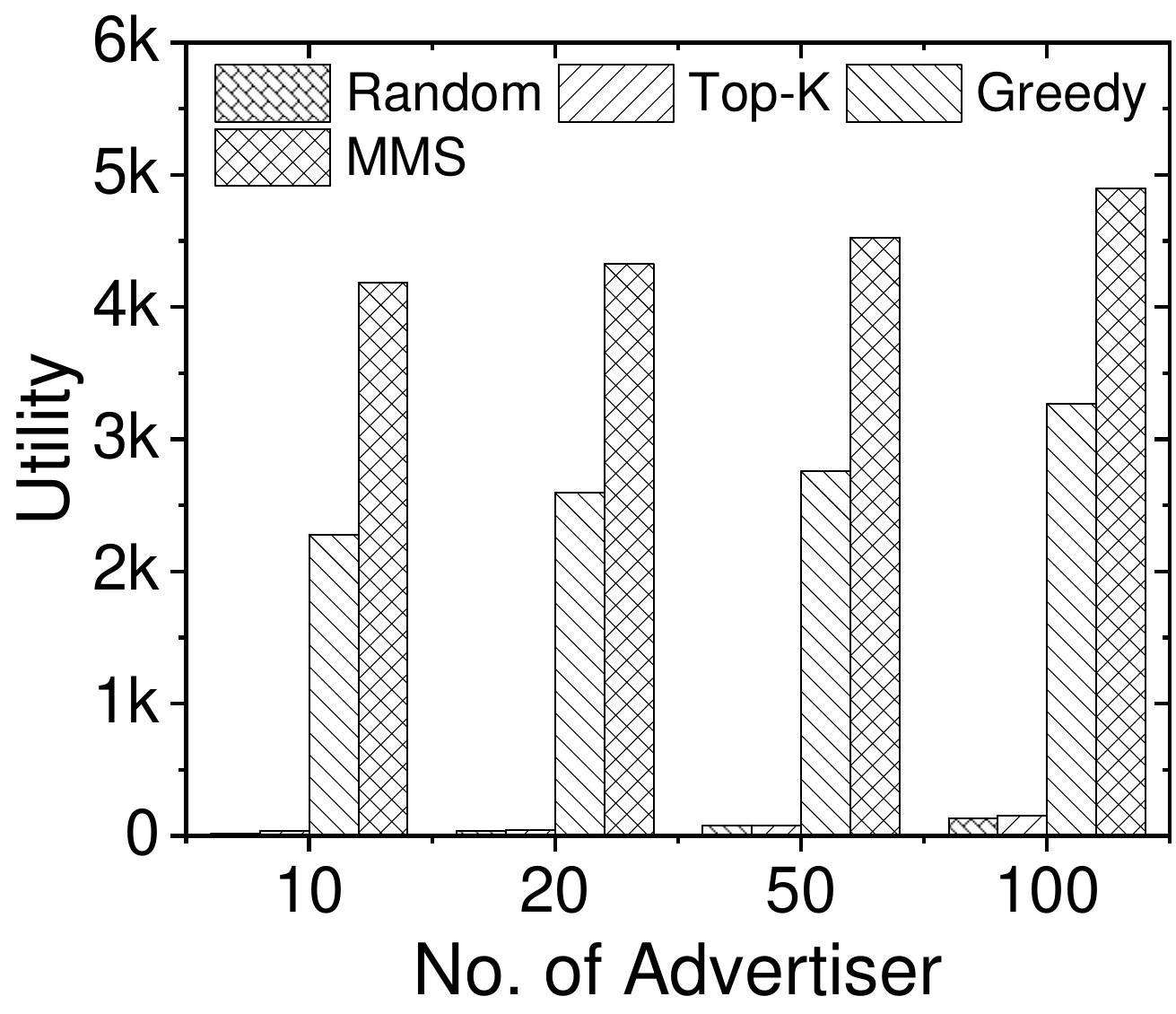} & \includegraphics[scale=0.175]{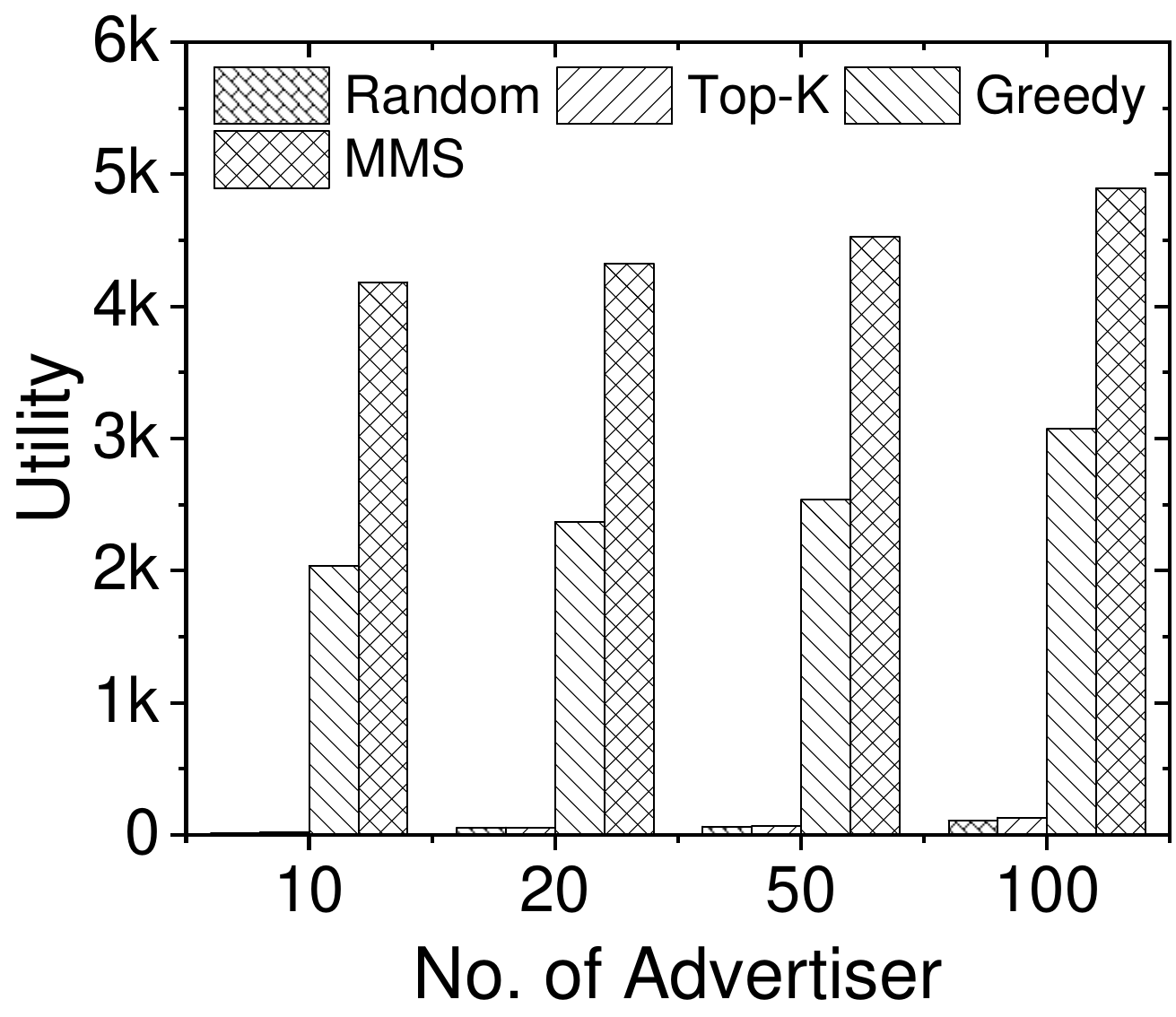}  & \includegraphics[scale=0.175]{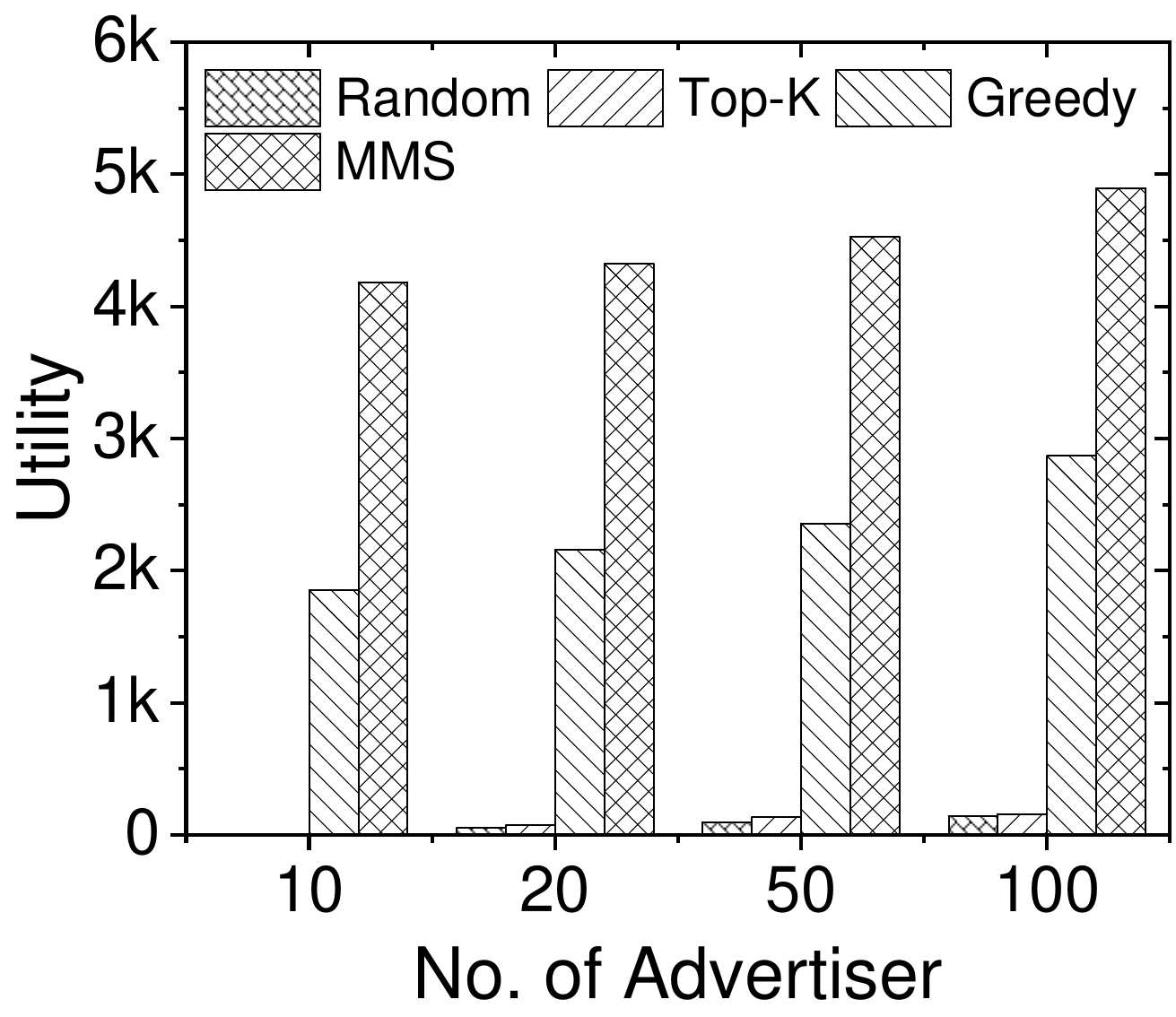} \\
\tiny{(g) Varying $|\mathcal{A}|, \beta ~\text{with}~ \alpha = 40\%$} & \tiny{(h)Varying $|\mathcal{A}|, \beta ~\text{with}~ \alpha = 60\%$} & \tiny{(i)Varying $|\mathcal{A}|, \beta ~\text{with}~ \alpha = 80\%$} \\
\includegraphics[scale=0.175]{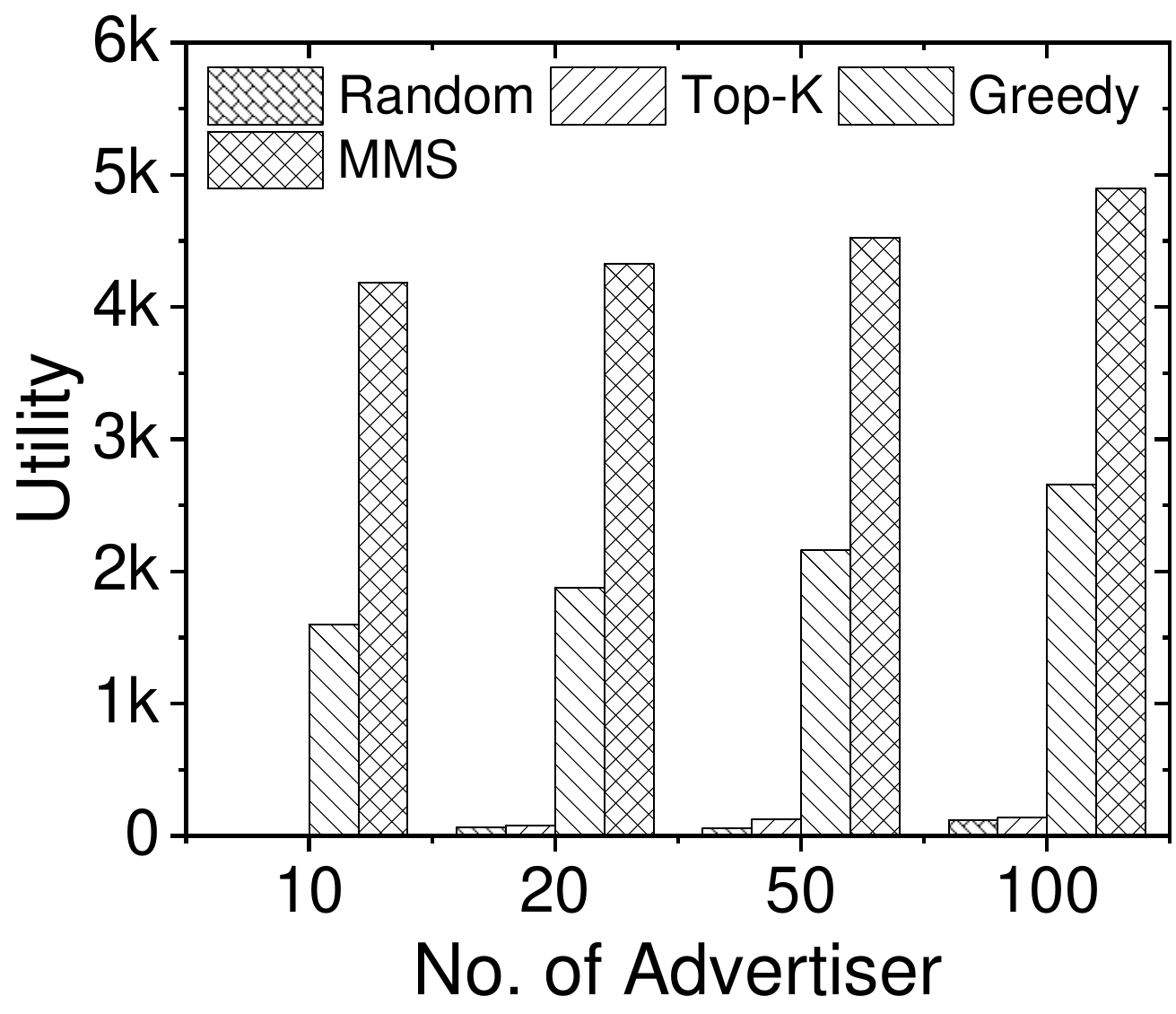} & \includegraphics[scale=0.175]{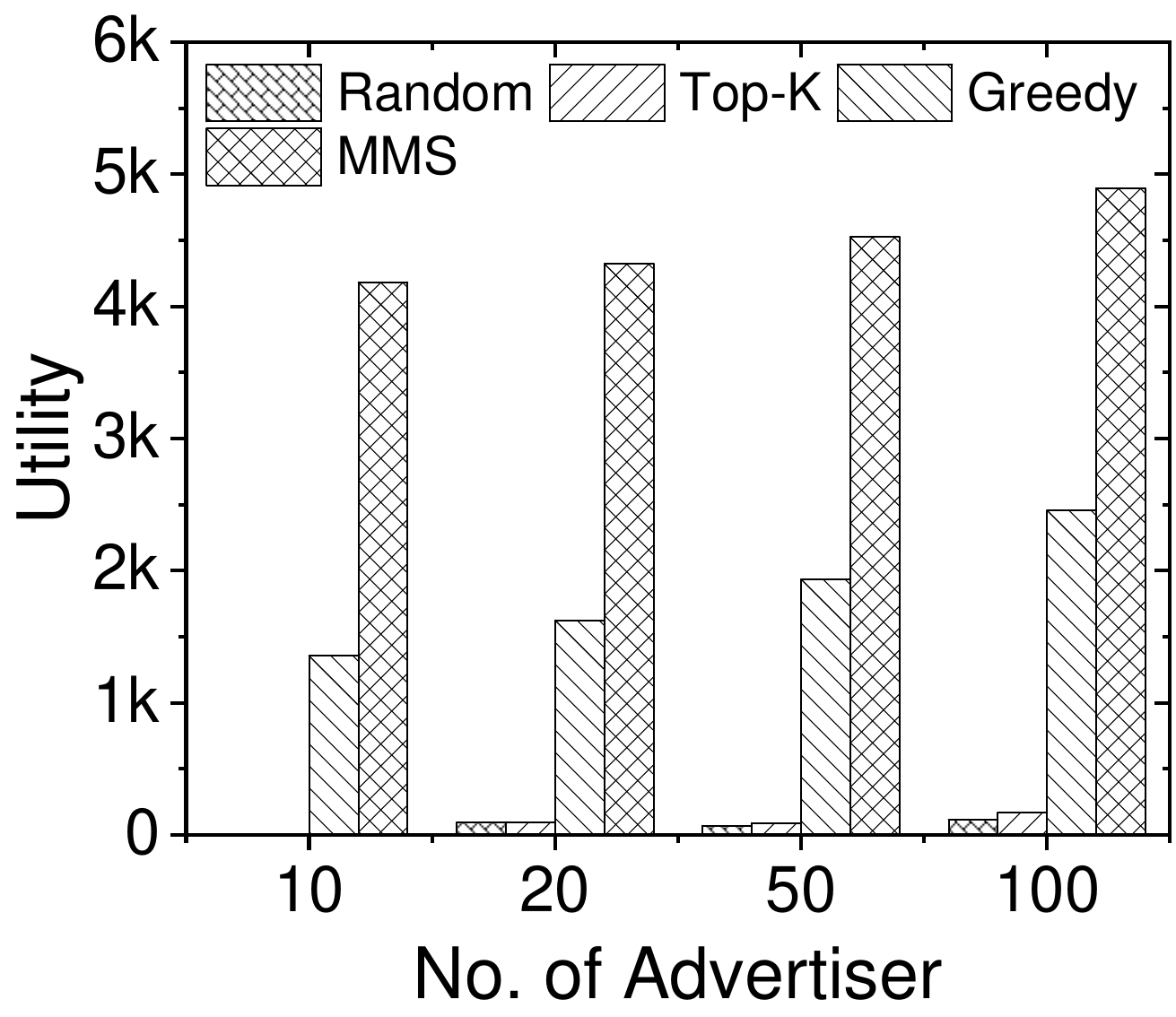} & \includegraphics[scale=0.175]{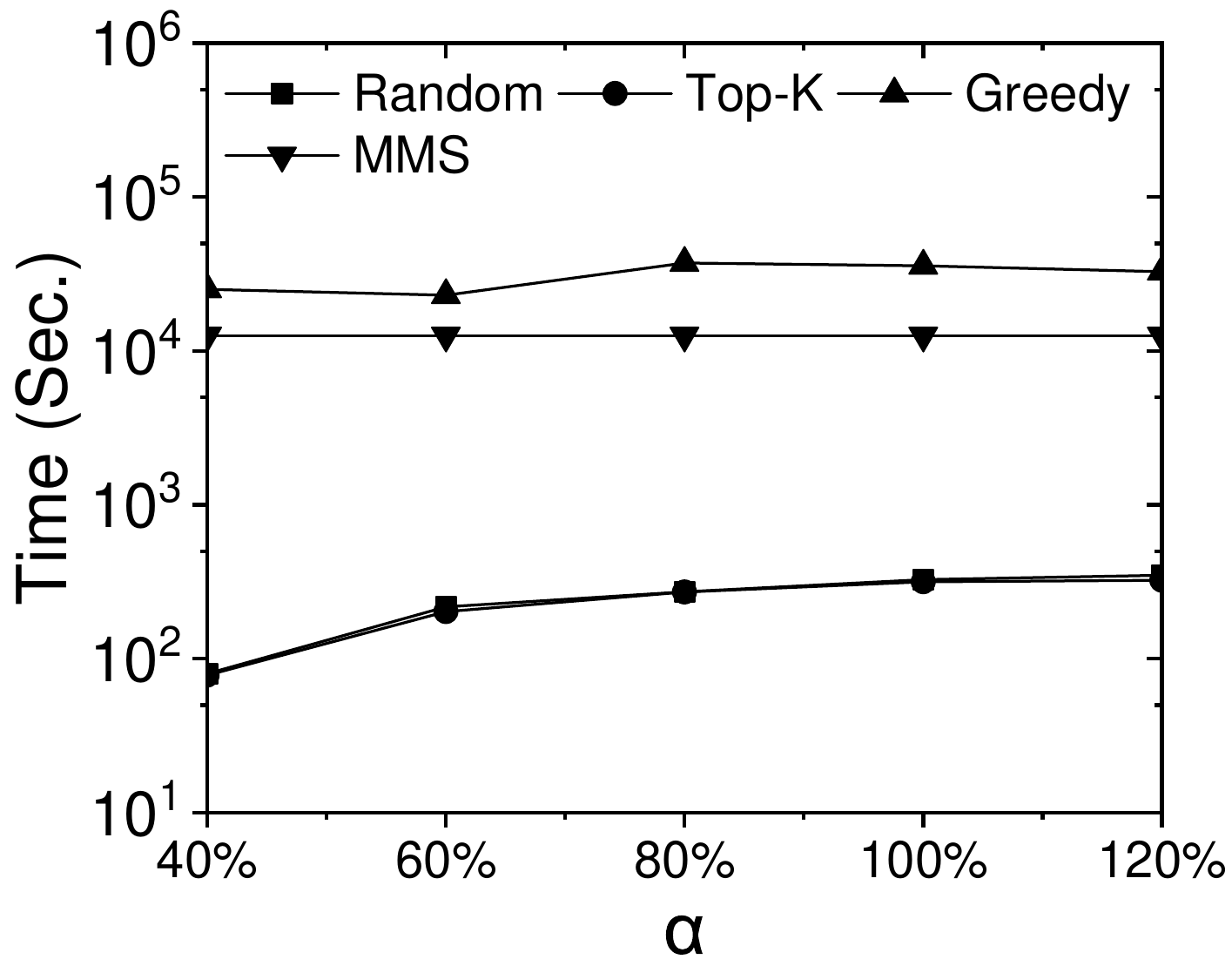} \\
\tiny{(j) Varying $|\mathcal{A}|, \beta ~\text{with}~ \alpha = 100\%$} & \tiny{(k) Varying $|\mathcal{A}|, \beta ~\text{with}~ \alpha = 120\%$} & \tiny{$(\ell)$ Runtime in LA} \\ 
\end{tabular}
\caption{ Varying $|\mathcal{A}|, \beta, \alpha$ in NYC $(a,b,c,d,e,f)$ and in LA $(g,h,i, j, k, \ell)$}
\label{Fig:Plot}
\end{figure*}
\vspace{-0.25cm}
\paragraph{Advertiser Vs. Utility.} Figure \ref{Fig:Plot} shows the effectiveness of the proposed solution approaches in the NYC and LA datasets, respectively. We have four main observations. First, with the fixed value of $\alpha$ and increasing number of advertisers from $10$ to $100$, the utility of advertisers increases. Second, with increasing $\alpha$ values from $40\%$ to $120\%$, the utility decreases because of a higher demand-supply ratio, $\alpha$, and in the extreme case when $\alpha = 120\%$ most of the advertisers influence demand not fulfilled. Third, among the baseline methods, the `Greedy' approach performs well compared to the `Random' and `Top-K'. We observe that the primary difference between the baseline and the proposed approximate maximin share approach is that when the influence provider runs out of slots i.e., $\alpha = 120\%$, some advertisers do not allocate any slots. On the other hand, the approximate Maximin share approach provides a balance allocation among all the advertisers, and no advertisers remain unallocated. Fourth, the greedy approach satisfies almost $75\%$ to $85\%$ advertisers individual influence demand while in maximin share approach $60\%$ to $70\%$ when $\alpha \geq 100\%$. 
\paragraph{Efficiency Test.}
Figure \ref{Fig:Plot}(f) and \ref{Fig:Plot}($\ell$) show the efficiency of the proposed as well as the baseline methods. We have three main observations. First, with the increase in the number of advertisers, the runtime increases for all the baseline methods. However, in the approximate maximin approach, run time decreases with the increase of the number of advertisers. Second, with the increase of $\alpha$ from $40\%$ to $120\%$ with fixed $\beta = 1\%$, the runtime decreases around $20\%$ to $25\%$ for both NYC and LA datasets. Third, the `Greedy' takes a much higher runtime compared to `Top-k' and `Random' due to an excessive number of marginal gain computations. In Figure \ref{Fig:Plot}, we only report the runtime for the default parameter settings, i.e., varying $\alpha$, $\beta = 5\%$, $|\mathcal{A}| = 20$ for both NYC and LA datasets. 

\paragraph{Scalability Test.} To show the scalability of the proposed methods, we vary $|\mathcal{A}|$ from $10$ to $100$ and $\alpha$ from $40\%$ to $120\%$. The experimental results show that the proposed approach is very sensitive compared to the baseline methods with fixed $\alpha, \beta,$ and$\epsilon$ values. Figure \ref{Fig:Plot} shows the impact of $\epsilon$ in the approximate maximin approach. The $\epsilon$ decides the sampling of billboard slots in Algorithm \ref{alg:RR_submodular}, varying from $0.1$ to $0.9$. It is observed that with the increase of $\epsilon$ value, the quality of the solutions and runtime decreases.
\paragraph{Additional Discussions.} The additional parameters used in our experiments are $\epsilon$, $\gamma$, and $\theta$. First, the $\epsilon$ decides the sample set size in Algorithm \ref{alg:RR_submodular} and impacts runtime reduction. Second, the $\gamma$ is the penalty ratio, which controls the penalty imposed on the influence provider when the advertisers are unsatisfied. Third, the $\theta$ is the distance parameter that signifies a range in which a billboard slot can influence a user. With the increase of $\theta$ from $25m$ to $150m$, the influence of all the baseline and the proposed methods increases. This happens because one slot can influence the number of users within its range. In our experiments, we use $\epsilon = 0.3$, $\gamma = 0.5$, and $\theta = 100m$ as the default setting for both datasets. We have experimented with different values of $\epsilon$, $\gamma$, and $\theta$; however, due to space limitations, all are not reported.

\section{Concluding Remarks}\label{Sec:CFD}
In this paper, we have studied the Fair Billboard Slot Allocation Problem where given a set of billboard and advertiser databases, the influence demand and budget of the advertisers, the goal is to allocate slots to the advertisers fairly and efficiently. Considering maximin fair share as a fairness criterion, we have proposed a solution approach that leads to the approximate fair division of the slots. The proposed methodology has been analyzed to understand its time and space requirements and the performance guarantee. The experimental results on real-life trajectory and billboard datasets show the effectiveness and efficiency of the proposed solution approach. Our future study on this problem will remain concentrated on exploring other fairness notions and developing solution approaches that generate allocation with a guarantee of these fairness criteria.

%
%
%
\bibliographystyle{splncs04}
\bibliography{Paper}

\begin{thebibliography}{10}
\providecommand{\url}[1]{\texttt{#1}}
\providecommand{\urlprefix}{URL }
\providecommand{\doi}[1]{https://doi.org/#1}

\bibitem{ali2022influential}
Ali, D., Banerjee, S., Prasad, Y.: Influential billboard slot selection using
  pruned submodularity graph. In: International Conference on Advanced Data
  Mining and Applications. pp. 216--230. Springer (2022)

\bibitem{ali2023influential}
Ali, D., Banerjee, S., Prasad, Y.: Influential billboard slot selection using
  spatial clustering and pruned submodularity graph (2023)

\bibitem{ali2024minimizing}
Ali, D., Banerjee, S., Prasad, Y.: Minimizing regret in billboard advertisement
  under zonal influence constraint (2024)

\bibitem{10.1145/3605098.3636052}
Ali, D., Banerjee, S., Prasad, Y.: Regret minimization in billboard
  advertisement under zonal influence constraint. In: Proceedings of the 39th
  ACM/SIGAPP Symposium on Applied Computing. p. 329–336. SAC '24, Association
  for Computing Machinery, New York, NY, USA (2024)

\bibitem{10.1145/3033274.3085136}
Barman, S., Krishna~Murthy, S.K.: Approximation algorithms for maximin fair
  division. In: Proceedings of the 2017 ACM Conference on Economics and
  Computation. p. 647–664. EC '17, Association for Computing Machinery, New
  York, NY, USA (2017)

\bibitem{barman2018finding}
Barman, S., Krishnamurthy, S.K., Vaish, R.: Finding fair and efficient
  allocations. In: Proceedings of the 2018 ACM Conference on Economics and
  Computation. pp. 557--574 (2018)

\bibitem{10.1145/3495159}
Wang, L., Yu, Z., Guo, B., Yang, D., Ma, L., Liu, Z., Xiong, F.: Data-driven
  targeted advertising recommendation system for outdoor billboard. ACM Trans.
  Intell. Syst. Technol.  \textbf{13}(2) (jan 2022)

\bibitem{zhang2018trajectory}
Zhang, P., Bao, Z., Li, Y., Li, G., Zhang, Y., Peng, Z.: Trajectory-driven
  influential billboard placement. In: Proceedings of the 24th ACM SIGKDD
  international conference on knowledge discovery \& data mining. pp.
  2748--2757 (2018)

\bibitem{zhang2020towards}
Zhang, P., Bao, Z., Li, Y., Li, G., Zhang, Y., Peng, Z.: Towards an optimal
  outdoor advertising placement: when a budget constraint meets moving
  trajectories. ACM Transactions on Knowledge Discovery from Data (TKDD)
  \textbf{14}(5),  1--32 (2020)

\bibitem{zhang2019optimizing}
Zhang, Y., Li, Y., Bao, Z., Mo, S., Zhang, P.: Optimizing impression counts for
  outdoor advertising. In: Proceedings of the 25th ACM SIGKDD international
  conference on knowledge discovery \& data mining. pp. 1205--1215 (2019)

\bibitem{zhang2021minimizing}
Zhang, Y., Li, Y., Bao, Z., Zheng, B., Jagadish, H.: Minimizing the regret of
  an influence provider. In: Proceedings of the 2021 International Conference
  on Management of Data. pp. 2115--2127 (2021)

\end{thebibliography}

\end{document}